\pdfoutput=1
\documentclass[11pt,a4paper]{article}
\usepackage{amsfonts}
\usepackage{amsmath}
\usepackage{amssymb}
\usepackage{amsthm}
\usepackage{bbm}
\usepackage{microtype}      % reduces overful boxes
\usepackage{color}
\usepackage{comment}
\usepackage{enumitem}
\usepackage[margin=2.41cm]{geometry}
\usepackage{graphicx}
\usepackage{hyperref}
\usepackage[utf8]{inputenc}
\usepackage{mathrsfs}
\usepackage{mathtools}
\usepackage{rotating}
\usepackage{tikz}
\usepackage{upgreek}
\usepackage{varwidth}
\usepackage[all,cmtip]{xy}
\usetikzlibrary{shapes.geometric}

\definecolor{darkred}{rgb}{0.8,0.1,0.1}
\hypersetup{
    colorlinks=false,         % false: boxed links; true: colored links,false is default
    linkcolor=darkred,
    citecolor=blue,
}

\theoremstyle{plain}
\newtheorem{theo}{Theorem}[section]

\newtheorem{propo}[theo]{Proposition}

\theoremstyle{definition}
\newtheorem{defi}[theo]{Definition}

\newenvironment{ex}
  {\pushQED{\qed}\exx}
  {\popQED\endexx}

\newenvironment{rem}
  {\pushQED{\qed}\remm}
  {\popQED\endremm}

\newenvironment{conv}
  {\pushQED{\qed}\convv}
  {\popQED\endconvv}

\numberwithin{equation}{section}
\def\nn{\nonumber}

\def\Mor{\mathrm{Mor}}

\def\cc{\mathrm{c}}
\def\colim{\mathrm{colim}}

\def\dd{\mathrm{d}}

\def\id{\mathrm{id}}
\def\oone{\mathbbm{1}}

\def\op{\mathrm{op}}

\def\supp{\mathrm{supp}}

\def\CC{\mathbf{C}}
\def\DD{\mathbf{D}}

\def\MM{\mathbf{M}}

\def\OO{\mathbf{O}}
\def\PPhi{\mathbf{\Phi}}

\def\AQFT{\mathbf{AQFT}}

\def\Desc{\mathbf{Desc}}

\def\Man{\mathbf{Man}}

\def\Mon{\mathbf{Mon}}
\def\Monr{\Mon_{\mathrm{rev}}}
\def\OCat{\mathbf{OrthCat}}

\def\Set{\mathbf{Set}}

\def\Vec{\mathbf{Vec}}
\def\VBun{\mathbf{VBun}}
\def\astAlg{\ast\mathbf{Alg}}

\def\astObj{\ast\mathbf{Obj}}
\def\astOp{\ast\mathbf{Op}}

\def\glue{\mathbf{glue}}
\def\data{\mathbf{data}}

\def\bbC{\mathbb{C}}

\def\bbR{\mathbb{R}}
\def\bbS{\mathbb{S}}

\def\bbZ{\mathbb{Z}}

\def\AAA{\mathfrak{A}}
\def\aaa{\mathfrak{a}}
\def\AAAA{\boldsymbol{\mathfrak{A}}}
\def\BBB{\mathfrak{B}}
\def\bbb{\mathfrak{b}}
\def\BBBB{\boldsymbol{\mathfrak{B}}}

\def\MMM{\mathcal{M}}
\def\O{\mathcal{O}}
\def\P{\mathcal{P}}

\def\sfE{\mathsf{E}}

\def\1{I}

\newcommand\und[1]{\underline{#1}}
\newcommand\ovr[1]{\overline{#1}}

\newcommand*\cocolon{       % colon for maps in the opposite direction  
        \nobreak
        \mskip6mu plus1mu
        \mathpunct{}%
        \nonscript
        \mkern-\thinmuskip
        {:}%
        \mskip2mu
        \relax
}

\def\sk{\vspace{1mm}}

\makeatletter
\let\@fnsymbol\@alph
\makeatother

%accelerates compilation by creating temporary files of compiled matrices
%\CompileMatrices 

%%%%%%%%%%%%%%%%%%%%%%%%%%%%%%%%%%%%%%%%%%%%%%%%%%%%%%%%%%%%%%%%%%%%%%%%

\title{%
Gluing algebraic quantum field theories on manifolds
}

\author{%
Angelos Anastopoulos$^{1,a}$\ and\
Marco Benini$^{1,2,b}$\vspace{4mm}\\
{\small ${}^1$ Dipartimento di Matematica, Dipartimento di Eccellenza 2023-27, Universit\`a di Genova,}\\
{\small Via Dodecaneso 35, 16146 Genova, Italy.}\vspace{2mm}\\
{\small ${}^2$ INFN, Sezione di Genova,}\\
{\small Via Dodecaneso 33, 16146 Genova, Italy.}\vspace{4mm}\\
{\small \begin{tabular}{ll}
Email: & ${}^a$~\texttt{agganast@hotmail.com}\\
& ${}^b$~\texttt{marco.benini@unige.it}\vspace{2mm}
\end{tabular}
}
}

\date{December 2024}

%%%%%%%%%%%%%%%%%%%%%%%%%%%%%%%%%%%%%%%%%%%%%%%%%%%%%%%%%%%%%%%%%%%%%%%%

\begin{document}

\maketitle

%\vspace{-5mm}

\begin{abstract}
\noindent It has been observed that, given an algebraic quantum field theory (AQFT) on a manifold $M$ and an open cover $\{M_\alpha\}$ of $M$, it is typically not possible to recover the global algebra of observables on $M$ by simply gluing the underlying local algebras subordinate to $\{M_\alpha\}$. Instead of gluing local algebras, we introduce a gluing construction for AQFTs subordinate to $\{M_\alpha\}$ and we show that for simple examples of AQFTs, constructed out of geometric data, gluing the local AQFTs subordinate to $\{M_\alpha\}$ recovers the global AQFT on $M$.
\end{abstract}

%\vspace{-2mm}

\paragraph*{Keywords:} algebraic quantum field theory, descent, colored $\ast$-operads. 
%\vspace{-2mm}

\paragraph*{MSC 2020:} 81T05, 18F20, 18M65. 
%\vspace{-1mm}

\renewcommand{\baselinestretch}{0.8}\normalsize
\tableofcontents
\renewcommand{\baselinestretch}{1.0}\normalsize

\newpage

%%%%%%%%%%%%%%%%%%%%%%%%%%%%%%%%%%%%%%%%%%%%%%%%
%%%%%%%%%%%%%%%%%%%%%%%%%%%%%%%%%%%%%%%%%%%%%%%%

\section{\label{sec:intro}Introduction}
Algebraic quantum field theory (AQFT) offers a mathematical axiomatization 
of various flavours of quantum field theory, most notably 
on Lorentzian manifolds (AQFT \`a la Haag--Kastler 
\cite{HaagKastler_1964_AlgebraicApproach} and locally covariant QFT
\cite{BrunettiFredenhagenVerch_2003_GenerallyCovariant}, 
including their conformal variants \cite{Rehren_2000_ChiralObservables,Pinamonti_2009_ConformalGenerally,BeniniGiorgettiSchenkel_2022_SkeletalModel}), 
but also on Riemannian manifolds \cite{Schlingemann_1999_EuclideanField}, 
all of which share the following three crucial features: 
(1)~physical observables are encompassed 
by non-commutative $\ast$-algebras, 
(2)~the dependence of observables on spacetime regions 
is encoded by a functor from a suitable source category 
(of spacetime regions) to the category of $\ast$-algebras 
(of observables), 
(3)~the causality axiom forces the algebraic multiplication 
to be ``commutative'' when the inputs come from 
distinguished pairs of morphisms in the source category
(such as causally disjoint spacetime subregions, 
i.e.\ that can only be joined by curves 
that exceed the speed of light). 
\sk 

Frequently, items~(1) and~(3) receive more attention 
compared to item~(2). Indeed, item~(1) provides 
an evident connection to the realm of quantum physics. 
Yet, algebras of observables are often mathematically 
indistinguishable, namely isomorphic, 
for physical models that are actually quite different. 
Item~(2) encodes more refined information that 
makes it possible to distinguish them also mathematically. 
(For instance, on each spacetime region, the algebras of observables 
associated with free scalar fields of different masses 
are isomorphic, 
however these isomorphisms are not natural, 
i.e.\ the corresponding functors assigning algebras 
to spacetime regions are not naturally isomorphic.) 
Finally, item~(3) is perhaps the main selling point
of AQFT from the physical point of view, 
because it codifies the fact that operations performed 
in causally disjoint spacetime region cannot affect each other 
(loosely speaking, no physical effect propagates faster than light). 
Yet, it is remarkable that, the statement of item~(3) 
becomes completely meaningless neglecting either item~(1) or item~(2). 
\sk 

Another situation where neglecting part of the AQFT structure 
leads to unpleasant consequences is related to descent, namely 
the attempt to reconstruct global information on a manifold $M$ 
by gluing local information subordinate to an open cover of $M$. 
(From a physical point of view, 
one may interpret descent as a way to reconstruct 
a large physical system by merging smaller parts of the system, 
which are potentially simpler to describe.) 
Since AQFTs are in particular $\ast$-algebra-valued functors $\AAA$ 
on a category of geometric objects (typically manifolds 
endowed with additional structure), it seems natural to attempt 
to formalize descent for AQFTs as a cosheaf condition 
on the underlying functor. 
Intuitively, one would expect descent to hold 
at least for those AQFTs that are constructed out of local data 
(such as vector bundles, connections, etc). 
On the contrary, even for the simple AQFT $\AAA$ 
associated with a real scalar field on the circle $\bbS^1$ 
the above mentioned cosheaf condition is violated. Indeed, 
it turns out \cite[Appendix]{BeniniSchenkel_2019_HigherStructures} 
(see also \ref{th:alg}) 
that the global algebra $\AAA(\bbS^1)$ 
is not recovered by gluing the local algebras 
$\AAA(M_\alpha)$ on an open cover $\{M_\alpha\}$ of $\bbS^1$
(unless $\bbS^1$ itself belongs to the cover). 
\sk

In front of this evidence, one is forced to reconsider 
the former attempt to formalize descent for AQFTs. 
Evidently, the above cosheaf condition ignores part of the AQFT structure 
(item (2) and consequently also item (3) above). Instead, 
when the gluing construction retains also this information, 
the situation improves considerably, as Theorem \ref{th:AQFT} shows: 
gluing the local AQFTs $\AAA \vert_{M_\alpha}$ on an open cover 
$\{M_\alpha\}$ of $\bbS^1$ recovers the global AQFT 
$\AAA$ of a real scalar field on $\bbS^1$. 
\sk 

Let us now explain how this result is achieved 
by developing a suitable gluing construction for AQFTs. 

In order to do so, let us start from the simpler problem 
of gluing the underlying local algebras 
on an open cover $\{M_\alpha\}$ of a manifold $M$. 
The latter problem boils down to computing the colimit of the diagram 
\begin{equation}\label{eq:alg-glue}
\xymatrix@C=1.5em{
\coprod_{\alpha\beta} \AAA(M_{\alpha\beta}) \ar@<2pt>[r] \ar@<-2pt>[r] & \coprod_\alpha \AAA(M_\alpha) 
}
\end{equation}
in the category of $\ast$-algebras. 

In order to glue the local AQFTs $\AAA \vert_{M_\alpha}$, 
one has to compute a colimit of the same shape, 
however in the category of AQFTs over $M$: 
\begin{equation}\label{eq:AQFT-glue}
\xymatrix{
\coprod_{\alpha\beta} \AAA \vert_{M_{\alpha\beta}} \vert^M \ar@<2pt>[r] \ar@<-2pt>[r] & \coprod_\alpha \AAA \vert_{M_\alpha} \vert^M 
} .  
\end{equation}

In spite of the formal analogy, the computation 
of the colimit \eqref{eq:AQFT-glue} is considerably 
more involved than that of the colimit \eqref{eq:alg-glue} 
because AQFTs have a much richer structure 
compared to $\ast$-algebras. 
Informally speaking, one must face two complications: 
\begin{itemize}
\item the superficial one is that, instead of dealing, 
for each patch $M_\alpha$, 
with a single $\ast$-algebra $\AAA(M_\alpha)$, 
one has to deal with a functor $\AAA \vert_{M_\alpha}$ assigning 
$\ast$-algebras to each subregion of $M_\alpha$; 

\item the deeper (and more challenging) one is that 
the gluing construction must be compatible with the causality axiom 
(item (3) above). 
\end{itemize}

It is the latter complication that prevents one from performing 
the gluing construction one spacetime region at a time, 
i.e.\ merely gluing the functors $\AAA \vert_{M_\alpha}$ 
object-wise. Instead, one has to tackle the problem ``globally'', 
which can be done by hard-coding the causality axiom 
as a structure, rather than a property. This is achieved 
by regarding AQFTs as $\ast$-algebras over the AQFT colored $\ast$-operad 
\cite{BeniniSchenkelWoike_2019_InvolutiveCategories,BeniniSchenkelWoike_2021_OperadsAlgebraica,Yau_2020_InvolutiveMonoidal}. 
Then the theory of operads, see e.g.\ 
\cite{Fresse_2009_ModulesOperads,Yau_2016_ColoredOperads}, 
provides means for computing the relevant colimit. 
This leads to the gluing construction for AQFTs 
that we develop in Subsection \ref{subsec:gluing}. 

Using our gluing construction for AQFTs, we show that 
gluing the local AQFTs $\AAA \vert_{M_\alpha}$ recovers 
the global AQFT $\AAA$ of a real scalar field on the circle $\bbS^1$, 
in sharp contrast with the failure to recover 
the global algebra $\AAA(\bbS^1)$ on the circle 
by gluing the local algebras $\AAA(M_\alpha)$ on the cover. 
\sk 

Before we move on to a more detailed outline of the content of this paper, 
let us mention how our gluing construction relates to the so-called 
Fredenhagen's universal algebra, a local-to-global construction 
that is well-known in the context of algebraic quantum field theory. 
The original form of Fredenhagen's universal algebra 
\cite{Fredenhagen_1993_GlobalObservables} can be interpreted 
as the left Kan extension of the functor underlying an AQFT $\AAA$ 
defined on a full subcategory of ``small'' spacetime regions 
(whose choice depends on the nature of the problem under investigation) 
along the embedding into the category of all spacetime regions. 
The shortcoming of this approach is that such a construction 
forgets the causality axiom and hence it does not guarantee that 
the resulting left Kan extended functor fulfils the causality axiom. 
This issue has been solved in \cite{BeniniSchenkelWoike_2021_OperadsAlgebraica} 
by refining Fredenhagen's universal algebra to {\it operadic} 
left Kan extension, which keeps track of the causality axiom 
and hence produces a genuine AQFT defined on all spacetimes 
out of an AQFT defined on ``small'' spacetimes. 
The key difference between the refined Fredenhagen's universal algebra 
construction and the gluing construction presented in this paper 
lies already in the respective inputs: while the former takes as input 
a single AQFT defined on ``small'' regions, the latter takes as input 
a compatible family of AQFTs subordinate to a cover. 
On the one hand, Fredenhagen's universal algebra produces the universal AQFT 
that to the ``small'' regions assigns algebras of observables 
that agree with those assigned by the input AQFT. 
On the other hand, our gluing construction produces the universal AQFT that, 
upon restriction to the given cover, agrees with the input family of AQFTs. 
\sk

Let us now outline the content of the rest of the paper. 
Section \ref{sec:prel} summarizes the mathematical tools 
that are needed to formalize AQFTs as $\ast$-algebras 
over a suitable colored $\ast$-operad, 
to develop our gluing construction for AQFTs and, 
in particular, to compute the relevant colimits 
in the category of operad algebras. 
Building upon this preliminary material, 
Section \ref{sec:fibcat} develops the gluing construction 
for AQFTs. As a preliminary step, Subsection \ref{subsec:fib-cat} 
assembles AQFTs over any manifold $M$ into a fibred category, 
which provides the necessary context to develop 
our gluing construction in Subsection \ref{subsec:gluing}. 
The latter is used in Section \ref{sec:descent}. 
First, Subsection \ref{subsec:probe} presents 
a class of AQFTs on manifolds (of a fixed odd dimension) 
that are constructed out of a geometric input, 
consisting of a natural vector bundle endowed with 
a natural fiber metric and a natural metric connection. 
(The $1$-dimensional real scalar field falls within this class.) 
Second, Subsection \ref{subsec:alg-descent} shows that 
gluing local algebras on a generic open cover fails to recover 
the global algebra on the manifold for this class of AQFTs 
(Theorem \ref{th:alg}). (This is just a mild generalization 
of the similar observation from 
\cite[Appendix]{BeniniSchenkel_2019_HigherStructures}.) 
In contrast with the previous failure, 
Subsection \ref{subsec:aqft-descent} proves that 
gluing local AQFTs recovers the global AQFT 
for the class of AQFTs mentioned above (Theorem \ref{th:AQFT}).

%%%%%%%%%%%%%%%%%%%%%%%%%%%%%%%%%%%%%%%%%%%%%%%%
%%%%%%%%%%%%%%%%%%%%%%%%%%%%%%%%%%%%%%%%%%%%%%%%

\section{\label{sec:prel}Preliminaries}

\subsection{\label{subsec:ISMCat}Involutive symmetric monoidal categories}
In this section we recall the concept of an involutive 
symmetric monoidal category and the associated symmetric monoidal 
category of $\ast$-objects. The latter will provide the base 
category for colored $\ast$-operads and $\ast$-algebras over them, 
which will be recalled in Section~\ref{subsec:operads} below. 
More details on involutive category theory, including proofs 
of the facts recalled in the present section, can be found in 
\cite{Jacobs_2012_InvolutiveCategories} 
and also in \cite{BeniniSchenkelWoike_2019_InvolutiveCategories}. 

\begin{defi}\label{defi:ISMCat}
An {\it involutive symmetric monoidal category} 
$\MM = (\MM,\otimes,\oone,J,j)$ consists of a symmetric monoidal 
category $(\MM,\otimes,\oone)$, a symmetric monoidal endofunctor 
$J = (J,J_2,J_0)\nobreak\mskip2mu\mathpunct{}\nonscript\mkern-\thinmuskip{:} \linebreak[0] (\MM,\otimes,\oone) \to (\MM,\otimes,\oone)$ 
and a symmetric monoidal natural isomorphism $j\colon \id \to J^2$ 
such that $j J = J j$.
\end{defi}

\begin{propo}
Let $\MM$ be an involutive symmetric monoidal category. 
Then the underlying symmetric monoidal endofunctor $J$ 
is strong (i.e.\ $J_2$ and $J_0$ are isomorphisms) 
and self-adjoint (more precisely, the natural isomorphism $j$ 
witnesses an adjunction $J \dashv J$). 
In particular, $J$ preserves both limits and colimits. 
\end{propo}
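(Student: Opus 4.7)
The plan is to deduce everything from the structural observation that the data $j : \id \to J^2$ together with the coherence $jJ = Jj$ exhibits $J$ as self-adjoint, i.e.\ $J \dashv J$. I would take the unit to be $\eta := j$ and the counit to be $\varepsilon := j^{-1}$. The two triangle identities then read $(j^{-1})_{JX} \circ J(j_X) = \id_{JX}$ and $J(j^{-1}_X) \circ j_{JX} = \id_{JX}$. The first is immediate from $jJ = Jj$, which gives $J(j_X) = j_{JX}$. For the second, applying the functor $J$ to $j^{-1}_X \circ j_X = \id_X$ produces $J(j^{-1}_X) = (J(j_X))^{-1} = j_{JX}^{-1}$, using $jJ = Jj$ once more, so the triangle collapses to $j^{-1}_{JX} \circ j_{JX} = \id_{JX}$. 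Once the self-adjunction is in place, the last sentence of the proposition is immediate: a left adjoint preserves colimits and a right adjoint preserves limits, and here the two adjoints coincide in the single functor $J$.

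For the strongness assertion, that $J_2$ and $J_0$ are invertible, I would invoke Kelly's doctrinal adjunction: in a monoidal adjunction between lax symmetric monoidal functors, the left adjoint is automatically strong. To apply it I only need to verify that the unit and counit of $J \dashv J$ are symmetric monoidal natural transformations. For the unit this is exactly the hypothesis that $j$ is symmetric monoidal, and for the counit it follows from the general fact that the inverse of any monoidal natural isomorphism is again monoidal. Doctrinal adjunction then yields that $J$, being the left adjoint, is strong.

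The principal step to get right is the verification of the triangle identities: the coherence $jJ = Jj$ is what collapses both triangles to the trivial equation $j^{-1}_{JX} \circ j_{JX} = \id_{JX}$. A more pedestrian alternative would be to extract explicit inverses for $J_2$ and $J_0$ directly from the monoidality equation $j_{X \otimes Y} = J(J_{2,X,Y}) \circ J_{2,JX,JY} \circ (j_X \otimes j_Y)$ (and its nullary counterpart for $J_0$), but this is cumbersome and would not in itself deliver the preservation of limits and colimits. Building the self-adjunction first and invoking doctrinal adjunction therefore handles both claims with one clean argument.
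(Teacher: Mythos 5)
Your argument is correct. Note that the paper itself does not prove this proposition: it is stated as a recalled fact, with the proofs deferred to the cited references on involutive categories (Jacobs; Benini--Schenkel--Woike), so there is no in-paper proof to compare against. Your verification of the triangle identities is exactly right --- the coherence $jJ = Jj$ gives $J(j_X) = j_{JX}$, which collapses both triangles for the candidate unit $j$ and counit $j^{-1}$ --- and this is the standard way the self-adjunction (indeed, adjoint self-equivalence, since $J^2 \cong \id$) is established in those references. The preservation of limits and colimits then follows as you say. For strongness, the appeal to doctrinal adjunction is legitimate: the unit is monoidal by hypothesis, the counit is monoidal because the inverse of an invertible monoidal natural transformation is again monoidal, and the left adjoint in a monoidal adjunction is strong; moreover the lax structure being inverted is the given $(J_2, J_0)$, as required. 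The only thing I would flag is that the doctrinal-adjunction step imports a nontrivial external theorem where the references give (and your ``pedestrian alternative'' sketches) a direct argument from the monoidality equation for $j$; either is acceptable, but if you cite Kelly you should state precisely which form of the theorem you are using, as you have essentially done.
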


Iterating $J_2$ allows one to define comparison isomorphisms $J_n$
involving $n$-fold tensor products, instead of just binary ones. 
To an involutive symmetric monoidal category one can canonically 
assign a symmetric monoidal category of $\ast$-objects, 
whose definition is recalled below. 

\begin{defi}\label{defi:astObj}
Let $\MM$ be an involutive symmetric monoidal category. 
A {$\ast$-object} $V$ in $\MM$ consists of an object $V \in \MM$ 
and a {\it $\ast$-involution} $\ast_V\colon V \to J(V)$ in $\MM$ 
such that $J(\ast) \circ \ast = j_V$. 
A {$\ast$-morphism} $f\colon V \to W$ in $\MM$ consists of a morphism 
$f\colon V \to W$ in $\MM$ such that $J(f) \circ \ast_V = \ast_W \circ f$. 
The category $\astObj(\MM)$ has $\ast$-objects in $\MM$ as objects, 
$\ast$-morphisms in $\MM$ as morphisms and it is endowed 
with the symmetric monoidal structure induced 
by the symmetric monoidal structures of $\MM$ and $J$. 
\end{defi}

It follows from the definition of a $\ast$-object 
that the underlying $\ast$-involution is an isomorphism. 
Furthermore, let us recall with the next statement 
that the symmetric monoidal category $\astObj(\MM)$ 
inherits nice properties from $\MM$. 

\begin{propo}\label{propo:astObj}
Let $\MM$ be an involutive symmetric monoidal category. 
If $\MM$ is closed, then $\astObj(\MM)$ is closed too. 
Furthermore, $\astObj(\MM)$ has all limits and colimits 
that exist in the category $\MM$ and those are created 
by the forgetful functor $\astObj(\MM) \to \MM$. 
\end{propo}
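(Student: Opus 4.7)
The plan is to build limits, colimits and the closed structure of $\astObj(\MM)$ on top of those of $\MM$, exploiting two features of $J$ recorded in the previous proposition: it is strong monoidal (so $J_2$, $J_0$ are invertible) and self-adjoint via $j$. Since $j\colon \id \to J^2$ is moreover an isomorphism, $J$ is a self-inverse autoequivalence of $\MM$, and hence preserves (and reflects) all limits, all colimits, and, when $\MM$ is closed, internal homs up to canonical isomorphism.

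I would treat limits first, colimits being dual. Given $D\colon I \to \astObj(\MM)$ whose underlying diagram $UD$ in $\MM$ admits a limit $L$, the components $\ast_{D(i)}$ assemble into a natural transformation $UD \Rightarrow J\circ UD$, naturality being precisely the $\ast$-morphism condition on each $D(f)$. Since $J$ preserves limits, this transformation induces a unique morphism $\ast_L\colon L \to J(L)$ compatible with the projections. The $\ast$-object axiom $J(\ast_L)\circ\ast_L = j_L$ is then verified componentwise on the limit $\lim J^2 UD \cong J^2(L)$: the projections send the left-hand side to $J(\ast_{D(i)})\circ\ast_{D(i)} = j_{D(i)}$, and the right-hand side to $j_{D(i)}$ (by naturality of $j$), so uniqueness in the universal property of $L$ forces the identity. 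A cone over $D$ in $\astObj(\MM)$ is precisely a cone over $UD$ in $\MM$ whose legs are $\ast$-morphisms, and unwinding the definition of $\ast_L$ shows that the mediating morphism into $L$ is itself a $\ast$-morphism, so $L$ with $\ast_L$ is the limit and it is created by the forgetful functor.

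For the closed structure, given $\ast$-objects $V$ and $W$ I would endow the underlying internal hom $[V,W]$ of $\MM$ with the $\ast$-involution
\begin{equation*}
[V,W] \xrightarrow{\;[\ast_V^{-1},\,\ast_W]\;} [J(V), J(W)] \xrightarrow{\;\sim\;} J[V,W],
\end{equation*}
where the second arrow inverts the canonical comparison $J[V,W] \to [J(V), J(W)]$ (adjoint to $J(\ev)\circ J_2^{-1}$), which is an isomorphism because $J$ is an autoequivalence. The currying bijection $\Hom_\MM(X\otimes V, W) \cong \Hom_\MM(X, [V,W])$ then restricts to a bijection between $\ast$-morphisms on each side: this is a direct translation of the definition of $\ast_{[V,W]}$ through the adjunction, combined with the standard formula $\ast_{X\otimes V} = J_2\circ(\ast_X\otimes\ast_V)$ for the $\ast$-involution on a tensor product.

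The hardest step will be verifying that $\ast_{[V,W]}$ is a genuine $\ast$-involution, i.e.\ $J(\ast_{[V,W]})\circ\ast_{[V,W]} = j_{[V,W]}$. This diagram chase intertwines the axioms $J(\ast_V)\circ\ast_V = j_V$, $J(\ast_W)\circ\ast_W = j_W$ and $jJ = Jj$ with the naturality of the comparison isomorphism $J[-,-]\cong[J(-),J(-)]$ in both variables; everything else reduces to formal consequences of universal properties and of $J$ being an autoequivalence with witnessing datum $j$.
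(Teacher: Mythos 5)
The paper does not prove this proposition: it recalls it from the literature, deferring to \cite{Jacobs_2012_InvolutiveCategories} and \cite{BeniniSchenkelWoike_2019_InvolutiveCategories}. Your argument is correct and is essentially the standard proof found there: (co)limits are created because $J$ preserves them (being a self-adjoint strong equivalence), so the involutions $\ast_{D(i)}$ assemble into a (co)cone inducing $\ast_L$, with the $\ast$-object axiom and the $\ast$-morphism property of mediating maps checked against the universal property of $J^2(L)$ resp.\ $J(L)$; and the internal hom is transported along the comparison isomorphism $J[V,W]\cong[J(V),J(W)]$. The only quibble is the direction of the monoidal structure map in your description of that comparison: with the paper's (lax) convention $J_2\colon J(-)\otimes J(-)\to J(-\otimes -)$, the comparison is adjoint to $J(\ev)\circ J_2$, not $J(\ev)\circ J_2^{-1}$ --- immaterial here since $J_2$ is invertible.
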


\begin{ex}\label{ex:Vec}
For our purposes the prime example of an involutive 
symmetric monoidal category is the category $\Vec_{\bbC}$ 
of $\bbC$-vector spaces, endowed 
with the standard symmetric monoidal structure 
and the involutive structure consisting of the 
symmetric monoidal endofunctor 
$\ovr{(-)}\colon \Vec_{\bbC} \to \Vec_{\bbC}$ 
that to a vector space $V \in \Vec_{\bbC}$ assigns its complex 
conjugate $\ovr{V} \in \Vec_{\bbC}$ and the symmetric monoidal 
natural isomorphism $\id\colon \id \to \ovr{\ovr{(-)}}$. 
The symmetric monoidal category $\astObj(\Vec_{\bbC})$ 
of $\ast$-objects in $\Vec_{\bbC}$ consists of $\bbC$-vector spaces 
$V$ endowed with a $\ast$-involution $\ast_V\colon V \to \ovr{V}$, 
namely an anti-linear map that squares to the identity, 
and $\bbC$-linear maps $f\colon V \to W$ compatible with the 
$\ast$-involutions. For instance, the complex conjugation map 
$\ovr{(-)}\colon \bbC \to \ovr{\bbC}$ in $\Vec_{\bbC}$ 
is a $\ast$-object in $\Vec_{\bbC}$. 
Note that $\Vec_{\bbC}$ is closed and cocomplete, 
hence $\astObj(\Vec_{\bbC})$ is closed and cocomplete too. 
\end{ex}

\subsection{\label{subsec:operads}Colored \texorpdfstring{$\ast$}{star}-operads and \texorpdfstring{$\ast$}{star}-algebras over colored \texorpdfstring{$\ast$}{star}-operads}
In this section we recall the basic tools from the theory of 
colored $\ast$-operads and $\ast$-algebras over them 
that will be needed in the rest of the paper. 
The material presented here is covered by 
\cite{BeniniSchenkelWoike_2019_InvolutiveCategories,BeniniSchenkelWoike_2021_OperadsAlgebraica} 
and by the recent textbook \cite{Yau_2020_InvolutiveMonoidal}, 
the only exception being an explicit model for computing colimits 
in the category of $\ast$-algebras over a colored $\ast$-operad. 
Standard textbook references for the general theory 
of colored operads (also known as multicategories) and algebras over them 
are \cite{Fresse_2009_ModulesOperads,Yau_2016_ColoredOperads}. 

Let us fix an involutive symmetric monoidal category $\MM = (\MM,\otimes,\oone,J,j)$. 
Unless otherwise stated, from now on we assume that $\MM$ 
is both closed and cocomplete. 

\begin{defi}\label{defi:operad}
A colored $\ast$-operad $\O$ in $\MM$ is a colored operad in 
$\astObj(\MM)$. Explicitly, $\O$ consists of the data listed below: 
\begin{enumerate}[label=(\alph*)]
\item A set of {\it colors} $c \in \O$. 
\item For all $t \in \O$ and all tuples 
$\und{c} = (c_1, \ldots, c_n)$ in $\O$ of length $n \geq 0$, 
an object of {\it operations} $\O(\substack{t\\ \und{c}}) \in \MM$.
\item For all $t \in \O$, all tuples $\und{b}$ in $\O$ 
of length $m \geq 1$ and all tuples $\und{a}_i$ in $\O$ of length 
$k_i \geq 0$, $i=1,\ldots,m$, a {\it composition} morphism 
\begin{equation}
\gamma\colon \O(\substack{t\\ \und{b}}) \otimes \bigotimes_{i=1}^m \O(\substack{b_i\\ \und{a}_i}) \longrightarrow \O(\substack{t\\ \und{\und{a}}})
\end{equation}
in $\MM$, where $\und{\und{a}} \coloneqq (\und{a}_1,\ldots,\und{a}_m)$ 
denotes the tuple formed by juxtaposition. 
\item For all $t \in \O$, an {\it identity} morphism 
\begin{equation}
\id_t\colon \oone \longrightarrow \O(\substack{t\\ t})
\end{equation}
in $\MM$. 
\item For all $t \in \O$, all tuples $\und{c}$ in $\O$ 
of length $n \geq 0$ and all permutations $\sigma \in \Sigma_n$, 
a morphism 
\begin{equation}
\O(\sigma)\colon \O(\substack{t\\ \und{c}}) \longrightarrow \O(\substack{t\\ \und{c}\sigma})
\end{equation}
in $\MM$, where 
$\und{c}\sigma \coloneqq (c_{\sigma(1)},\ldots,c_{\sigma(n)})$ 
is given by right permutation. 
\item For all $t \in \O$ and all tuples $\und{c}$ in $\O$ 
of length $n \geq 0$, a {\it $\ast$-involution} 
\begin{equation}
\ast\colon \O(\substack{t\\ \und{c}}) \longrightarrow J(\O(\substack{t\\ \und{c}}))
\end{equation}
in $\MM$.
\end{enumerate}
The data above are subject to the following axioms: 
\begin{enumerate}[label=(\roman*)]
\item The data~(c--e) are subject to the usual colored operad axioms. 
\item The data~(f) are subject to the $\ast$-object axiom, 
see Definition~\ref{defi:astObj}. 
\item The data~(c--e) and (f) are subject to compatibility axioms. 
Explicitly, the $\ast$-involutions from (f) must be compatible 
with:
\begin{itemize}
    \item the compositions from (c), namely the diagram 
    \begin{subequations}
    \begin{equation}
    \xymatrix{
    \O(\substack{t\\ \und{b}}) \otimes \displaystyle \bigotimes_{i=1}^m \O(\substack{b_i\\ \und{a}_i}) \ar[rr]^-{\gamma} \ar[d]_-{\ast \otimes \bigotimes_i \ast} && \O(\substack{t\\ \und{\und{a}}}) \ar[d]^-{\ast} \\ 
    J(\O(\substack{t\\ \und{b}})) \otimes \displaystyle \bigotimes_{i=1}^m J(\O(\substack{b_i\\ \und{a}_i})) \ar[r]_-{J_{m+1}} & J \left( \O(\substack{t\\ \und{b}}) \otimes \displaystyle \bigotimes_{i=1}^m \O(\substack{b_i\\ \und{a}_i}) \right) \ar[r]_-{J(\gamma)} & J(\O(\substack{t\\ \und{\und{a}}}))
    }
    \end{equation}
    in $\MM$ commutes for all $t \in \O$, all tuples $\und{b}$ in $\O$ 
    of length $m \geq 1$ and all tuples $\und{a}_i$ in $\O$ 
    of length $k_i \geq 0$, $i=1,\ldots,m$.
    \item the identities from (d), namely the diagram 
    \begin{equation}
    \xymatrix{
    \oone \ar[rr]^-{\id_t} \ar[d]_-{J_0} && \O(\substack{t\\ t}) \ar[d]^-{\ast} \\ 
    J(\oone) \ar[rr]_-{J(\id_t)} && J(\O(\substack{t\\ t}))
    }
    \end{equation}
    in $\MM$ commutes for all $t \in \O$.
    \item the permutation actions from (e), namely the diagram 
    \begin{equation}
    \xymatrix{
    \O(\substack{t\\ \und{c}}) \ar[rr]^-{\O(\sigma)} \ar[d]_-{\ast} && \O(\substack{t\\ \und{c}\sigma}) \ar[d]^-{\ast} \\ 
    J(\O(\substack{t\\ \und{c}})) \ar[rr]_-{J(\O(\sigma))} && J(\O(\substack{t\\ \und{c}\sigma}))
    }
    \end{equation}
    \end{subequations}
    in $\MM$ commutes for all $t \in \O$, all tuples $\und{c}$ in $\O$ 
    of length $n \geq 0$ and all permutations $\sigma \in \Sigma_n$. 
\end{itemize}
\end{enumerate}
\end{defi}

\begin{defi}\label{defi:multifunctor}
A $\ast$-multifunctor $F\colon \O \to \P$ between the colored 
$\ast$-operads $\O, \P$ in $\MM$ is a multifunctor $F\colon \O \to \P$ 
between the colored operads $\O, \P$ in $\astObj(\MM)$. 
Explicitly, $F$ consists of the data listed below: 
\begin{enumerate}[label=(\alph*)]
\item A map $F\colon \O \to \P$ between the underlying sets of colors. 
\item For all $t \in \O$ and all tuples $\und{c}$ in $\O$ 
of length $n \geq 0$, a morphism 
$F\colon \O(\substack{t\\ \und{c}}) \to \P(\substack{Ft \\ F\und{c}})$ 
in $\MM$, where $F\und{c} \coloneqq (Fc_1,\ldots,Fc_n)$ is the tuple 
formed by acting with $F$ on each component. 
\end{enumerate}
The data above are subject to the evident compatibilities 
with compositions, identities, permutation actions 
and $\ast$-involutions of $\O$ and $\P$. 
\end{defi} 

\begin{defi}
The category $\astOp$ has colored $\ast$-operads $\O$, 
see Definition~\ref{defi:operad}, as objects 
and $\ast$-multifunctors $F\colon \O \to \P$, see Definition~\ref{defi:multifunctor}, as morphisms. 
\end{defi}

\begin{defi}\label{defi:astAlg}
A $\ast$-algebra $A$ over a colored $\ast$-operad 
$\O \in \astOp$ in $\MM$ is an algebra over the colored operad $\O$ 
in $\astObj(\MM)$. Explicitly, $A$ consists of the data listed below:
\begin{enumerate}[label=(\alph*)]
\item For all $t \in \O$, an object $A(t) \in \MM$, 
\item For all $t \in \O$ and all tuples $\und{c}$ in $\O$ 
of arbitrary length $n \geq 0$, an {\it $\O$-action} 
\begin{equation}
\alpha\colon \O(\substack{t\\ \und{c}}) \otimes \displaystyle\bigotimes_{i=1}^n A(c_i) \longrightarrow A(t)
\end{equation}
in $\MM$. 
\item For all $t \in \O$, a {\it $\ast$-involution} 
\begin{equation}
\ast_A\colon A(t) \longrightarrow J(A(t))
\end{equation}
in $\MM$. 
\end{enumerate}
These data are subject to the following axioms: 
\begin{enumerate}[label=(\roman*)]
\item The data~(a-b) are subject to the usual axioms 
for algebras over the colored operad $\O$. 
\item The data~(c) are subject to the $\ast$-object axiom, 
see Definition~\ref{defi:astObj}.
\item The data~(b) and (c) are subject to compatibility axioms. 
Explicitly, the $\O$-actions from (b) must be compatible with 
the $\ast$-involutions from (c) and those underlying the 
$\ast$-operad $\O$, namely the diagram 
\begin{equation}
\xymatrix{
\O(\substack{t\\ \und{c}}) \otimes \displaystyle\bigotimes_{i=1}^n A(c_i) \ar[rr]^-{\alpha} \ar[d]_-{\ast \otimes \bigotimes_i \ast_A} && A(t) \ar[d]^-{\ast_A} \\ 
J(\O(\substack{t\\ \und{c}})) \otimes \displaystyle\bigotimes_{i=1}^n J(A(c_i)) \ar[r]_-{J_{n+1}} & J \left( \O(\substack{t\\ \und{c}}) \otimes \displaystyle\bigotimes_{i=1}^n A(c_i) \right) \ar[r]_-{J(\alpha)} & J(A(t))
}
\end{equation}
in $\MM$ commutes for all $t \in \O$ and all tuples $\und{c}$ 
in $\O$ of length $n \geq 0$. 
\end{enumerate}
\end{defi}

\begin{defi}\label{defi:astAlgmor}
A {\it $\ast$-algebra morphism} $\Phi\colon A \to B$ over a colored 
$\ast$-operad $\O \in \astOp$ in $\MM$ is an algebra morphism 
$\Phi\colon A \to B$ over the colored operad $\O$ in $\astObj(\MM)$. 
Explicitly, $\Phi$ consists of a morphism $\Phi_c\colon A(c) \to B(c)$ 
in $\MM$ for all $c \in \O$, subject to the evident compatibilities 
with $\O$-actions and $\ast$-involutions of $A$ and $B$. 
\end{defi}

\begin{defi}
The category $\astAlg(\O)$, for a colored $\ast$-operad 
$\O \in \astOp$, has $\ast$-algebras $A$ 
over $\O$, see Definition~\ref{defi:astAlg}, as objects 
and $\ast$-algebra morphisms $\Phi\colon A \to B$ over $\O$, 
see Definition~\ref{defi:astAlgmor}, as morphisms. 
\end{defi}

Let us recall that a $\ast$-multifunctor $F\colon \O \to \P$ 
gives rise to an adjunction 
\begin{equation}\label{eq:F!F*}
\xymatrix{
F_!\colon \astAlg(\O) \ar@<2pt>[r] & \astAlg(\P) \cocolon F^\ast. \ar@<2pt>[l]
}
\end{equation}
between the categories of $\ast$-algebras over $\O$ and $\P$. 
The right adjoint functor $F^\ast$ is simply given by restriction 
along $F$. Its left adjoint $F_!$ may be interpreted as 
a ``$\ast$-operadic'' left Kan extension. 
The adjunction \eqref{eq:F!F*} enhances the similar one for algebras
over colored operads by taking $\ast$-involutions into account. 

One recovers the free-forgetful adjunction 
\begin{equation}\label{eq:free-forget}
\xymatrix{
L\colon \displaystyle\prod_{t \in \O} \astObj(\MM) \ar@<2pt>[r] & \astAlg(\O) \cocolon R. \ar@<2pt>[l]
}
\end{equation} 
for $\ast$-algebras over $\O \in \astOp$ as a special instance 
of the adjunction \eqref{eq:F!F*} as follows. 
Consider the 
colored $\ast$-operad $\O_0 \in \astOp$ on the set of colors 
of $\O$ and with operations given by the initial object 
$\O_0(\substack{t\\ \und{c}}) \coloneqq \emptyset \in \MM$ for all 
$t \in \O_0$ and all tuples $\und{c} \neq (t)$ in $\O_0$, and $\O_0(\substack{t\\ t}) \coloneqq \oone \in \MM$ for all $t \in \O_0$. 
(The identities are given by the identity $\id_\oone$ morphism on $\oone$, while compositions, permutation actions and 
$\ast$-involutions of $\O_0$ are necessarily trivial.) 
Consider also the $\ast$-multifunctor $F\colon \O_0 \to \O$ 
in $\astOp$ given by the identity on the underlying sets of colors. 
(The action $F \colon \O_0(\substack{t\\ \und{c}}) \to \O(\substack{Ft\\ F\und{c}})$ is given 
by the identity morphism $\id_{Ft} = \id_\oone \colon \oone \to \O(\substack{Ft\\ Ft})$ when $\und{c} = (t)$, 
whereas it is necessarily 
trivial otherwise.) Then, by indentifying canonically the categories 
$\astAlg(\O_0) = \prod_{t \in \O}\astObj(\MM)$ of $\ast$-algebras over 
$\O_0 \in \astOp$ and of colored collections of $\ast$-objects in $\MM$, one obtains 
the adjunction \eqref{eq:free-forget} by specializing 
\eqref{eq:F!F*} to the $\ast$-multifunctor $F\colon \O_0 \to \O$ 
in $\astOp$ described above. One immediately realizes that 
the right adjoint functor $R$ acts on a $\ast$-algebra 
$A \in \astAlg(\O)$ simply by forgetting its $\O$-action. 

Since it will be later useful in order to present 
an explicit model computing a colimit in $\astAlg(\O)$, 
let us also recall the explicit form of the left adjoint functor 
$L$, as well as that of the adjunction counit $\varepsilon$. 
The left adjoint $L$ assigns to a colored collection of 
$\ast$-objects $V_\bullet = (V_t) \in \prod_{t \in \O} \astObj(\MM)$ 
the $\ast$-algebra $L(V_\bullet) \in \astAlg(\O)$ 
consisting of the data listed below: 
\begin{enumerate}[label=(\alph*)]
\item For all $t \in \O$, the object 
\begin{subequations}
\begin{equation}
L(V_\bullet)(t) \coloneqq \coprod_{n \geq 0} L^n(V_\bullet)(t) \in \MM,
\end{equation}
where 
\begin{equation}
L^n(V_\bullet)(t) \coloneqq \colim \left( \xymatrix@C=3.5em{\displaystyle\coprod_{(c_1,\ldots,c_n)} \displaystyle\coprod_{\sigma \in \Sigma_n} \O(\substack{t\\ \und{c}}) \otimes \displaystyle\bigotimes_{i=1}^n V_{c_{\sigma(i)}} \ar@<2pt>[r]^-{\O(\sigma) \otimes \id} \ar@<-2pt>[r]_-{\id \otimes b_{\sigma}} & \displaystyle\coprod_{(c_1,\ldots,c_n)} \O(\substack{t\\ \und{c}}) \otimes \displaystyle\bigotimes_{i=1}^n V_{c_i}} \right) \in \MM 
\end{equation}
\end{subequations}
is defined as the coequalizer of the parallel pair given by the 
permutation action of $\O$ and the symmetric braiding of $\MM$. 
\item For all $t \in \O$ and all tuples $\und{b}$ in $\O$ 
of length $m \geq 0$, the $\O$-action 
\begin{subequations}
\begin{equation}
\lambda \colon 
\xymatrix{\O(\substack{t\\ \und{b}}) \otimes \displaystyle\bigotimes_{i=1}^m L(V_\bullet)(b_i) \ar[r] & L(V_\bullet)(t)}
\end{equation}
in $\MM$, which is defined by the diagram
\begin{equation}\label{eq:free-O-action}
\xymatrix{
\O(\substack{t\\ \und{b}}) \otimes \displaystyle\bigotimes_{i=1}^m L(V_\bullet)(b_i) \ar@{-->}[r]^-{\lambda} & L(V_\bullet)(t) \\ 
\O(\substack{t\\ \und{b}}) \otimes \displaystyle\bigotimes_{i=1}^m \left( \O(\substack{b_i\\ \und{a}_i}) \otimes \displaystyle\bigotimes_{j=1}^{k_i} V_{a_{ij}} \right) \ar[u]^-{\id \otimes \bigotimes_i \iota_{k_i,\und{a}_i}} \\ 
\O(\substack{t\\ \und{b}}) \otimes \displaystyle\bigotimes_{i=1}^m \O(\substack{b_i\\ \und{a}_i}) \otimes \displaystyle\bigotimes_{i=1}^m \displaystyle\bigotimes_{j=1}^{k_i} V_{a_{ij}} \ar[r]_-{\gamma \otimes \id} \ar[u]^-{\cong} & \O(\substack{t\\ \und{\und{a}}}) \otimes \displaystyle\bigotimes_{i=1}^m \displaystyle\bigotimes_{j=1}^{k_i} V_{a_{ij}} \ar[uu]_-{\iota_{\sum_i k_i,\und{\und{a}}}}
}
\end{equation}
\end{subequations}
in $\MM$, for all $k_1, \ldots, k_m \geq 0$ and all tuples 
$\und{a_i}$ in $\O$ of length $k_i$. In the above diagram the 
vertical isomorphism is given by the symmetric braiding of $\MM$, 
whereas by $\gamma$ we denote the composition of $\O$ and by 
$\iota_{n,\und{c}}$, for some $n \geq 0$ and some tuple $\und{c}$ in $\O$ of length $n$, we denote the morphism
\begin{equation}
\xymatrix{
\O(\substack{t\\ \und{c}}) \otimes \displaystyle\bigotimes_{i=1}^n V_{c_i} \ar[r]^-{l_{\und{c}}} & L^n(V_\bullet)(t) \ar[r]^-{\iota_n} & L(V_\bullet)(t)},
\end{equation}
where $l_{\und{c}}$ is determined by the inclusion in the $\und{c}$-component of the coproduct $\coprod_{\und{d}} \O(\substack{t\\ \und{d}}) \otimes \bigotimes_{i=1}^n V_{d_i}$ and the respective leg of the limit cocone to $ L^n(V_\bullet)(t)$, while $\iota_n$ denotes the inclusion in the $n$-component of the coproduct $\coprod_{\ell \geq 0} L^\ell(V_\bullet)(t) = L(V_\bullet)(t)$.
\item For all $t \in \O$, the $\ast$-involution 
\begin{subequations}
\begin{equation}
\ast_L \colon 
\xymatrix{L(V_\bullet)(t) \ar[r] & J(L(V_\bullet)(t))}
\end{equation}
in $\MM$, which is defined by the diagram 
\begin{equation}\label{eq:free-O-involution}
\xymatrix@C=4em{
L(V_\bullet)(t) \ar@{-->}[r]^-{\ast_L} & J(L(V_\bullet)(t)) \\ 
& J \left( \O(\substack{t\\ \und{c}}) \otimes \displaystyle\bigotimes_{i=1}^n V_{c_i} \right) \ar[u]_-{J(\iota_{n,\und{c}})} \\ 
\O(\substack{t\\ \und{c}}) \otimes \displaystyle\bigotimes_{i=1}^n V_{c_i} \ar[r]_-{\ast \otimes \bigotimes_i \ast_{V_{c_i}}} \ar[uu]^-{\iota_{n,\und{c}}} & J(\O(\substack{t\\ \und{c}})) \otimes \displaystyle\bigotimes_{i=1}^n J(V_{c_i}) \ar[u]_-{J_{n+1}}
}
\end{equation}
\end{subequations}
in $\MM$ involving the $\ast$-involutions 
$\ast$ of $\O$ and $\ast_{V_c}$ of $V_c$ for all $c \in \O$. 
\end{enumerate} 

Concerning the counit 
\begin{equation}\label{eq:free-forget-counit}
\varepsilon\colon L R \longrightarrow \id 
\end{equation}
of the free-forgetful adjunction $L \dashv R$ 
from \eqref{eq:free-forget}, it is defined component-wise, 
for all $\ast$-algebras $A \in \astAlg(\O)$, by setting $\varepsilon_A\colon L R(A) \longrightarrow A$ in $\astAlg(\O)$
to be given by the $\O$-action $\alpha$ underlying $A$. 
\sk

Later on, we shall compute certain colimits in $\astAlg(\O)$, 
for a given colored $\ast$-operad $\O \in \astOp$ in $\MM$. 
Such colimits exist because of the next statement, which follows 
from the similar result about limits and colimits 
in the categories of algebras over a colored operad (see e.g.\ 
\cite{Fresse_2009_ModulesOperads, PavlovScholbach_2018_AdmissibilityRectification}) 
by recalling that $\astAlg(\O)$ is just the category of algebras 
over the colored operad $\O$ in $\astObj(\MM)$, 
which has all limits and colimits that exist in the category $\MM$. 

\begin{propo}\label{propo:astAlg}
The category $\astAlg(\O)$ of $\ast$-algebras over a colored $\ast$-operad 
$\O \in \astOp$ is (co)complete if the underlying 
involutive symmetric monoidal category $\MM$ is complete 
(respectively closed and cocomplete). 
Furthermore, limits, filtered colimits and reflexive coequalizers 
are created in the category $\prod_{t \in \O} \astObj(\MM)$ of colored collections of $\ast$-objects, 
i.e.\ color-wise in $\astObj(\MM)$. 
\end{propo}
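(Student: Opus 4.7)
The plan is to reduce the statement to the known analogous result for algebras over colored operads in an ordinary (not involutive) symmetric monoidal base category, applied to the base $\astObj(\MM)$. By Definitions \ref{defi:operad} and \ref{defi:astAlg}, a colored $\ast$-operad $\O$ in $\MM$ is by construction a colored operad in $\astObj(\MM)$ and a $\ast$-algebra over $\O$ is an algebra over this colored operad in $\astObj(\MM)$. Therefore $\astAlg(\O)$ is canonically isomorphic to the category of $\O$-algebras internal to the symmetric monoidal category $\astObj(\MM)$, and similarly $\prod_{t\in\O}\astObj(\MM)$ is the category of colored collections in $\astObj(\MM)$.

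Next I would invoke Proposition \ref{propo:astObj} to transport the relevant categorical properties from $\MM$ to $\astObj(\MM)$: if $\MM$ is complete (respectively closed and cocomplete), then so is $\astObj(\MM)$, with all limits and colimits created by the forgetful functor $\astObj(\MM) \to \MM$. In particular, the tensor product on $\astObj(\MM)$ preserves the colimits it inherits from $\MM$ in each variable separately, because this property is created by the forgetful functor to $\MM$ (where it holds by closedness).

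At this point the proposition becomes a direct application of the standard (co)completeness theorem for algebras over colored operads in any bicomplete closed symmetric monoidal base category, as recorded in Fresse's book and the Pavlov--Scholbach reference cited in the text. This theorem asserts that $\O$-algebras form a complete category whenever the base is complete, and a cocomplete category whenever the base is closed and cocomplete, and that limits, filtered colimits and reflexive coequalizers are created in the underlying category of colored collections. Transporting this statement along the identification of the previous paragraph yields exactly the claim, with the colored collections in question being $\prod_{t\in\O}\astObj(\MM)$.

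The only mildly delicate point is the creation of filtered colimits and reflexive coequalizers, which requires that tensor products preserve these specific shapes of colimits in each variable; this is precisely where closedness of $\MM$ (and hence of $\astObj(\MM)$) enters, and it is what makes the free-forgetful adjunction \eqref{eq:free-forget} monadic with a monad preserving those colimits. I do not expect any genuine obstacle beyond book-keeping: the strategy is simply to observe that the entire theory of algebras over colored operads in a closed bicomplete symmetric monoidal category applies verbatim once one takes the base to be $\astObj(\MM)$ rather than $\MM$.
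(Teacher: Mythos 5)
Your proposal is correct and follows essentially the same route as the paper: the text preceding the proposition justifies it precisely by identifying $\astAlg(\O)$ with the category of algebras over the colored operad $\O$ in the base $\astObj(\MM)$, transporting (co)completeness and closedness from $\MM$ via Proposition~\ref{propo:astObj}, and then citing the standard result from Fresse and Pavlov--Scholbach. Your added remark about why closedness is needed for the creation of filtered colimits and reflexive coequalizers is accurate and consistent with how the paper uses the statement.
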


As already anticipated, we shall compute colimits in $\astAlg(\O)$, 
for a given colored $\ast$-operad $\O \in \astOp$ in $\MM$. 
Proposition~\ref{propo:astAlg} explains that this can be done 
color-wise in $\astObj(\MM)$ provided one 
manages to rephrase colimits into coequalizers 
that are color-wise reflexive in $\astObj(\MM)$. 
First, it is a general fact that colimits can be computed 
via coproducts and coequalizers. Furthermore, the coequalizer of 
the parallel pair $\Phi,\Psi\colon A_1 \to A_2$ in $\astAlg(\O)$ 
can be computed via the reflexive coequalizer 
\begin{equation}\label{eq:astAlg-coeq}
\colim \left( \xymatrix{A_1 \ar@<2pt>[r]^-{\Phi} \ar@<-2pt>[r]_-{\Psi} & A_2} \right) \coloneqq \colim \left( \xymatrix{A_1 \amalg A_2 \ar@<4pt>[r]^-{\langle \Phi,\id \rangle} \ar@<-4pt>[r]_-{\langle \Psi,\id \rangle} & A_2 \ar[l]|-{\iota_2}} \right) \in \astAlg(\O),
\end{equation}
where we use the notation $\langle f \rangle$ to denote the morphism 
that is defined by the list of morphisms $f = (f_i)$ 
on the respective components of the coproduct 
and $\iota_i$ denotes the morphism 
that includes the component labeled by $i$. 
Finally, the adjunction \eqref{eq:free-forget} allows us to compute the coproduct 
of $A_i \in \astAlg(\O)$, $i \in I$, via the reflexive coequalizer 
\begin{equation}\label{eq:astAlg-coprod}
\coprod_{i \in I} A_i \coloneqq \colim \left( \xymatrix{L \left( \displaystyle\coprod_{i \in I} R L R (A_i) \right) \ar@<4pt>[r]^-{d_0} \ar@<-4pt>[r]_-{d_1} & L \left( \displaystyle\coprod_{i \in I} R (A_i) \right) \ar[l]|-{s_0}} \right) \in \astAlg(\O),
\end{equation}
where $d_0 \coloneqq \varepsilon_{L(\coprod_i R(A_i))} \circ L(\langle R L (\iota_i), i \in I \rangle)$, 
$d_1 \coloneqq L(\coprod_i U(\varepsilon_{A_i}))$ and 
$s_0 \coloneqq L(\coprod_i \eta_{R(A_i)})$. 
(Here $\eta$ and $\varepsilon$ denote respectively the unit and 
the counit of the adjunction \eqref{eq:free-forget}.) 
We shall use these facts in Section~\ref{subsec:gluing} 
to present an explicit model for a colimit in the category of AQFTs, 
interpreted as algebras over the AQFT colored $\ast$-operad, 
as explained in Subsection \ref{subsec:AQFToperad}.

\subsection{\label{subsec:AQFToperad}Orthogonal categories and AQFT colored \texorpdfstring{$\ast$}{star}-operads}
This section recalls the concept of an orthogonal 
category and the assignment of the  associated AQFT 
colored $\ast$-operad. 
The functoriality of this assignment shall play 
a crucial role throughout the rest of the paper. 
More details on the topics discussed here can be found in 
\cite{BeniniSchenkelWoike_2019_InvolutiveCategories, BeniniSchenkelWoike_2021_OperadsAlgebraica, Yau_2020_InvolutiveMonoidal, BeniniGiorgettiSchenkel_2022_SkeletalModel, BeniniCarmonaSchenkel_2023_StrictificationTheorems}. 

\begin{defi}
An {\it orthogonal category} $\ovr{\CC} = (\CC,\perp)$ 
consists of a (small) category $\CC$ 
endowed with an {\it orthogonality relation} 
${\perp} \subseteq \Mor\,\CC \,{{}_{\mathsf{t}}{\times}{}_{\mathsf{t}}}\,\Mor\,\CC$, 
i.e.\ a subset of the set of pairs of morphisms to a common target 
$f_1\colon c_1 \to t \leftarrow c_2 \cocolon f_2$ in $\CC$ 
that is symmetric ($(f_1,f_2) \in {\perp}$ 
iff $(f_2,f_1) \in {\perp}$) and stable under composition 
(if $(f_1,f_2) \in {\perp}$, 
then $(h f_1 g_1,h f_2 g_2) \in {\perp}$ 
for all composable morphisms $g_1, g_2, h$ in $\CC$). 
Elements $(f_1,f_2) \in {\perp}$ are called {\it orthogonal pairs}. 
\sk

An {\it orthogonal functor} $F\colon \ovr{\CC} \to \ovr{\DD}$ 
is a functor $F\colon \CC \to \DD$ that is compatible 
with the orthogonality relations, 
i.e.\ $F(\perp_\CC) \subseteq {\perp_\DD}$. 
\sk

We denote by $\OCat$ the category whose objects are 
orthogonal categories $\ovr{\CC}$ and whose morphisms 
are orthogonal functors $F\colon \ovr{\CC} \to \ovr{\DD}$. 
\end{defi}

Several examples of orthogonal categories that feature 
in various flavours of algebraic quantum field theory, 
e.g.\ locally covariant quantum field theory 
\cite{BrunettiFredenhagenVerch_2003_GenerallyCovariant, FewsterVerch_2015_AlgebraicQuantum}, 
Haag--Kastler nets \cite{HaagKastler_1964_AlgebraicApproach}, 
as well as their conformal counterparts 
\cite{Pinamonti_2009_ConformalGenerally, CrawfordRejznerVicedo_2022_Lorentzian2D}, 
can be found in 
\cite{BeniniSchenkelWoike_2019_InvolutiveCategories, BeniniGiorgettiSchenkel_2022_SkeletalModel}. 
We focus on the examples that are of primary interest 
for the rest of this paper. 

\begin{ex}\label{ex:Man}
The first example of an orthogonal category that is relevant for 
this paper is $\ovr{\Man_m} = (\Man_m,\perp) \in \OCat$, 
with $m \geq 1$. 
Here $\Man_m$ denotes the category whose objects are 
$m$-dimensional oriented smooth manifolds and whose morphisms 
are orientation preserving open embeddings $f\colon M \to N$. 
Orthogonal pairs $(f_1,f_2) \in {\perp}$ are by definition pairs 
of morphisms $f_1\colon M_1 \to N \leftarrow M_2 \cocolon f_2$ in $\Man_m$ 
with disjoint images $f_1(M_1) \cap f_2(M_2) = \emptyset$. 
\end{ex}

\begin{ex}\label{ex:slice}
Given an orthogonal category $\ovr{\CC} = (\CC,\perp) \in \OCat$ 
and an object $t \in \CC$, one can form the {\it slice} 
orthogonal category $\ovr{\CC_{/t}} \in \OCat$ 
by endowing the slice category $\CC_{/t}$ with the orthogonality 
relation given by the pullback of $\perp$ 
along the forgetful functor $\CC_{/t} \to \CC$, 
namely a pair 
\begin{equation}
\xymatrix{
f_1\colon c_1 \ar[dr]_-{g_1} \ar[r] & c \ar[d]_-{h} & c_2 \cocolon f_2 \ar[dl]^-{g_2} \ar[l] \\
& t
}
\end{equation}
in $\CC_{/t}$ is orthogonal iff $(f_1,f_2) \in {\perp}$. 
In particular, the forgetful functor becomes an orthogonal functor 
$\ovr{\CC_{/t}} \to \ovr{\CC}$. 
\sk

The construction of the slice orthogonal category 
applied to $\ovr{\Man_m}$ provides 
the orthogonal categories most relevant for the rest 
of the paper, namely for all $M \in \Man_m$, we shall consider 
the orthogonal categories $\ovr{\Man_{m\,/M}} \in \OCat$.

Let us observe that these slice orthogonal categories 
admit an equivalent, yet much more concrete, 
interpretation in terms of open subsets. 
Indeed, consider the orthogonal category 
$\ovr{\OO(M)} = (\OO(M),\perp) \in \OCat$ consisting of 
the category $\OO(M)$, whose objects are 
non-empty open subsets $U \subseteq M$ and whose morphisms are 
inclusions $U \subseteq V$, endowed with 
the orthogonality relation $\perp$ formed by all pairs of morphism with common target $(U_1 \subseteq V,\,U_2 \subseteq V) \in {\perp}$ 
in $\OO(M)$ that are disjoint, namely $U_1 \cap U_2 = \emptyset$. 
One realizes immediately that $\ovr{\OO(M)}$ 
is orthogonally equivalent to the slice orthogonal category 
$\ovr{\Man_{m\,/M}}$. In particular, there is 
an orthogonal functor $\iota_M\colon \ovr{\OO(M)} \to \ovr{\Man_m}$ 
in $\OCat$ that sends an open subset $U \subseteq M$ to $U \in \Man_m$ by endowing 
the open subset $U \subseteq M$ with the restriction 
of the orientation and smooth manifold structure of $M$. 
\end{ex}

We now review the definition of the AQFT colored $\ast$-operad 
associated with an orthogonal category. For more details see 
\cite{BeniniSchenkelWoike_2019_InvolutiveCategories, BeniniSchenkelWoike_2021_OperadsAlgebraica, BeniniCarmonaSchenkel_2023_StrictificationTheorems}.  

\begin{defi}\label{defi:AQFToperad}
Let $\MM$ be an involutive symmetric monoidal category. 
The {\it AQFT colored $\ast$-operad} $\O_{\ovr{\CC}} \in \astOp$ 
in $\MM$ associated with a (small) orthogonal category 
$\ovr{\CC} \in \OCat$ consists of the data listed below: 
\begin{enumerate}[label=(\alph*)]
\item The set of colors of $\O_{\ovr{\CC}}$ coincides with 
the set of objects of $\ovr{\CC}$. 
\item For all $t \in \O_{\ovr{\CC}}$ and all tuples 
$\und{c} = (c_1, \ldots, c_n)$ in $\O_{\ovr{\CC}}$ of length $n \geq 0$, 
the object of {\it operations} 
\begin{equation}
\O_{\ovr{\CC}}(\substack{t\\ \und{c}}) \coloneqq \big( \Sigma_n \times \CC(\und{c},t) \big) \big/ \sim_{\perp} \otimes \oone \in \MM,
\end{equation}
where $\otimes$ denotes the $\Set$-tensoring on $\MM$, 
$\CC(\und{c},t) \coloneqq \prod_{i=1}^n \CC(c_i,t)$ is the set 
of $n$-tuples of morphisms in $\CC$ with sources 
$\und{c} = (c_1,\ldots,c_n)$ and common target $t$ 
and $\sim_{\perp}$ is the equivalence relation defined as follows: 
$(\tau,\und{g}) \sim_{\perp} (\tau,\und{g}^\prime)$ 
if and only if $\und{g} = \und{g}^\prime$ and the right permutation 
$\tau \tau^{\prime\,-1}\colon \und{g} \tau^{-1} \to \und{g} \tau^{\prime\,-1}$ 
is generated by transpositions of adjacent orthogonal pairs. 
We denote the equivalence class of $(\tau,\und{g})$ by $[\tau,\und{g}]$.
\item For all $t \in \O_{\ovr{\CC}}$, all tuples $\und{b}$ in $\O_{\ovr{\CC}}$ 
of length $m \geq 1$ and all tuples $\und{a}_i$ in $\O_{\ovr{\CC}}$ 
of length $k_i \geq 0$, $i=1,\ldots,m$, the composition morphism 
\begin{subequations}
\begin{equation}
\gamma\colon \O_{\ovr{\CC}}(\substack{t\\ \und{b}}) \otimes \bigotimes_{i=1}^m \O_{\ovr{\CC}}(\substack{b_i\\ \und{a}_i}) \longrightarrow \O_{\ovr{\CC}}(\substack{t\\ \und{\und{a}}})
\end{equation}
in $\MM$ defined by the $\Set$-tensoring $(-) \otimes \oone$ of the map 
\begin{align}
\big( \Sigma_m \times \CC(\und{b},t) \big) \big/ \sim_{\perp} \times \prod_{i=1}^m \big( \Sigma_{k_i} \times \CC(\und{a}_i,b_i) \big) \big/ \sim_{\perp} &\longrightarrow \big( \Sigma_n \times \CC(\und{\und{a}},t) \big) \big/ \sim_{\perp} \\ 
\big( [\tau,\und{g}], [\sigma_1,\und{f}_1], \ldots, [\sigma_m,\und{f}_m] \big) &\longmapsto \big[ \tau(\sigma_1,\ldots,\sigma_m),\und{g}(\und{f}_1,\ldots,\und{f}_m) \big], \nn 
\end{align}
\end{subequations}
where $n \coloneqq \sum_i k_i$ is the length of $\und{\und{a}}$, 
$\tau(\sigma_1,\ldots,\sigma_m) \in \Sigma_n$ 
is the composition of the block permutation induced by $\tau$ 
and the block sum permutation $(\sigma_1,\ldots,\sigma_m)$ 
and $\und{g}(\und{f}_1,\ldots,\und{f}_m) \coloneqq (g_1 f_{1 1}, \ldots, g_m f_{m k_m}) \in \CC(\und{\und{a}},t)$ 
is the $n$-tuple given by composition in $\CC$. 
\item For all $t \in \O_{\ovr{\CC}}$, the identity morphism 
\begin{subequations}
\begin{equation}
\id_t\colon \oone \longrightarrow \O_{\ovr{\CC}}(\substack{t\\ t})
\end{equation}
in $\MM$ defined by the $\Set$-tensoring $(-) \otimes \oone$ of the map 
\begin{align}
\{\ast\} &\longrightarrow \big( \Sigma_1 \times \CC(t,t) \big) \big/ \sim_{\perp} \\ 
\ast &\longmapsto [e,\id_t], \nn 
\end{align}
\end{subequations}
where $e \in \Sigma_1$ is the identity permutation 
and $\id_t \in \CC(t,t)$ is the identity morphism. 
\item For all $t \in \O_{\ovr{\CC}}$, all tuples $\und{c}$ in $\O_{\ovr{\CC}}$ 
of length $n \geq 0$ and all permutations $\sigma \in \Sigma_n$, 
the morphism 
\begin{subequations}
\begin{equation}
\O_{\ovr{\CC}}(\sigma)\colon \O_{\ovr{\CC}}(\substack{t\\ \und{c}}) \longrightarrow \O_{\ovr{\CC}}(\substack{t\\ \und{c}\sigma})
\end{equation}
in $\MM$ defined by the $\Set$-tensoring $(-) \otimes \oone$ of the map 
\begin{align}
\big( \Sigma_n \times \CC(\und{c},t) \big) \big/ \sim_{\perp} &\longrightarrow \big( \Sigma_n \times \CC(\und{c}\sigma,t) \big) \big/ \sim_{\perp} \\ 
[\tau,\und{g}] &\longmapsto [\tau \sigma, \und{g} \sigma]. \nn 
\end{align}
\end{subequations}
\item For all $t \in \O_{\ovr{\CC}}$ and all tuples $\und{c}$ in $\O_{\ovr{\CC}}$ 
of length $n \geq 0$, the {\it $\ast$-involution} 
\begin{subequations}
\begin{equation}
\ast\colon \O_{\ovr{\CC}}(\substack{t\\ \und{c}}) \longrightarrow J(\O_{\ovr{\CC}}(\substack{t\\ \und{c}}))
\end{equation}
in $\MM$ defined by the $\Set$-tensoring $(-) \otimes J_0$ 
(see Definition~\ref{defi:ISMCat} for the structure morphism 
$J_0\colon \oone \to J(\oone)$ in $\MM$) of the map 
\begin{align}
\big( \Sigma_n \times \CC(\und{c},t) \big) \big/ \sim_{\perp} &\longrightarrow \big( \Sigma_n \times \CC(\und{c},t) \big) \big/ \sim_{\perp} \\ 
[\tau,\und{g}] \mapsto [\rho_n \tau, \und{g}], \nn 
\end{align}
\end{subequations}
where $\rho_n \in \Sigma_n$ denotes the order reversing permutation. 
\end{enumerate}
These data fulfil the axioms of a colored $\ast$-operad, 
see \cite{BeniniSchenkelWoike_2019_InvolutiveCategories}. 
\end{defi}

\begin{defi}\label{defi:AQFTmultifunctor}
Let $\MM$ be a cocomplete involutive symmetric monoidal category. 
The {\it AQFT $\ast$-multifunctor} 
$F\colon \O_{\ovr{\CC}} \to \O_{\ovr{\DD}}$ in $\astOp$ in $\MM$ 
associated with an orthogonal functor $F\colon \ovr{\CC} \to \ovr{\DD}$ 
in $\OCat$ consists of the data listed below: 
\begin{enumerate}[label=(\alph*)]
\item The map between the underlying sets of colors coincides 
with the map between sets of objects underlying $F\colon \ovr{\CC} \to \ovr{\DD}$. 
\item For all $t \in \O_{\ovr{\CC}}$ and all tuples $\und{c}$ in $\O_{\ovr{\CC}}$ 
of length $n \geq 0$, the morphism 
\begin{subequations}
\begin{equation}
F\colon \O_{\ovr{\CC}}(\substack{t\\ \und{c}}) \longrightarrow \O_{\ovr{\DD}}(\substack{Ft \\ F\und{c}})
\end{equation}
in $\MM$ defined by the $\Set$-tensoring $(-) \otimes \oone$ of the map 
\begin{align}
\big( \Sigma_n \times \CC(\und{c},t) \big) \big/ \sim_{\perp} &\longrightarrow \big( \Sigma_n \times \DD(F\und{c},Ft) \big) \big/ \sim_{\perp} \\ 
[\tau,\und{g}] &\longmapsto \big[ \tau,F\und{g} \big], \nn 
\end{align}
\end{subequations}
where $F\und{g} \coloneqq (Fg_1,\ldots,Fg_n)$ is the $n$-tuple 
given by the action of $F\colon \ovr{\CC} \to \ovr{\DD}$ 
on each component of $\und{g}$. 
\end{enumerate}
\end{defi}

\begin{defi}\label{defi:AQFToperad-functor}
The {\it AQFT operad} functor 
\begin{equation}
\O_{(-)}\colon \OCat \longrightarrow \astOp
\end{equation}
sends an orthogonal category $\ovr{\CC} \in \OCat$ 
to the AQFT colored $\ast$-operad $\O_{\ovr{\CC}} \in \astOp$, 
see Definition~\ref{defi:AQFToperad}, 
and an orthogonal functor $F\colon \ovr{\CC} \to \ovr{\DD}$ in $\OCat$ 
to the $\ast$-multifunctor $F\colon \O_{\ovr{\CC}} \to \O_{\ovr{\DD}}$ 
in $\astOp$, see Definition~\ref{defi:AQFTmultifunctor}. 
\end{defi}

The AQFT colored $\ast$-operad $\O_{\ovr{\CC}}$ allows for a concise 
definition of AQFTs over $\ovr{\CC}$. 

\begin{defi}\label{defi:AQFT}
Let $\ovr{\CC} = (\CC,\perp) \in \OCat$ be an orthogonal category 
and $\MM$ an involutive symmetric monoidal category. 
The category $\AQFT(\ovr{\CC}) \coloneqq \astAlg(\O_{\ovr{\CC}})$ 
of {\it $\MM$-valued AQFTs over $\ovr{\CC}$} is defined as that 
of $\ast$-algebras over the AQFT colored $\ast$-operad 
$\O_{\ovr{\CC}} \in \astOp$, see Definitions~\ref{defi:astAlg} 
and~\ref{defi:AQFToperad}. 
\end{defi}

\begin{rem}\label{rem:AQFTsAsFunctors}
The previous definition is extremely concise, however quite 
abstract. One might ask what is an AQFT 
$\AAA \in \AQFT(\ovr{\CC})$ in more concrete terms. 
It is shown in \cite{BeniniSchenkelWoike_2019_InvolutiveCategories, BeniniSchenkelWoike_2021_OperadsAlgebraica} 
that $\AAA$ admits the following more familiar description: 
an AQFT $\AAA \in \AQFT(\ovr{\CC})$ is a functor 
$\AAA\colon \CC \to \ast\Monr(\MM)$ from the category $\CC$ 
to the category of 
{\it reversing} $\ast$-monoids in $\MM$ such that, 
for all orthogonal pairs $f_1\colon c_1 \to t \leftarrow c_2 \cocolon f_2$, 
the diagram 
\begin{equation}
\xymatrix@C=5em{
\AAA(c_1) \otimes \AAA(c_2) \ar[r]^-{\AAA(f_1) \otimes \AAA(f_2)} \ar[d]_-{\AAA(f_1) \otimes \AAA(f_2)} & \AAA(t) \otimes \AAA(t) \ar[d]^-{\mu} \\
\AAA(t) \otimes \AAA(t) \ar[r]_-{\mu^{\op}} & \AAA(t) 
}
\end{equation}
in $\MM$ commutes, where $\mu^{(\op)}$ denotes the (opposite) 
multiplication of $\AAA(t) \in \ast\Monr(\MM)$. 
This is (a generalized version of) the well-known causality axiom. 
\sk

In the same fashion, it turns out that a morphism 
$\Phi\colon \AAA \to \BBB$ in $\AQFT(\ovr{\CC})$ is precisely 
a natural transformation between the underlying functors 
$\AAA, \BBB\colon \CC \to \ast\Monr(\MM)$. 
\end{rem}

%%%%%%%%%%%%%%%%%%%%%%%%%%%%%%%%%%%%%%%%%%%%%%%%
%%%%%%%%%%%%%%%%%%%%%%%%%%%%%%%%%%%%%%%%%%%%%%%%

\section{\label{sec:fibcat}Gluing algebraic quantum field theories}
Let $\MM$ be an involutive symmetric monoidal category 
and consider the orthogonal category $\ovr{\Man_m}$ 
of $m$-dimensional oriented smooth manifolds. 
In this section we explain first that $\MM$-valued AQFTs 
over the orthogonal category $\ovr{\OO(M)}$ of open subsets, 
for all $M \in \ovr{\Man_m}$, 
form a category fibred over $\Man_m$. Next, we construct 
an explicit and convenient functor that extends 
AQFTs over $\ovr{\OO(M)}$ to AQFTs over $\ovr{\OO(N)}$ 
along a morphism $f\colon M \to N$ in $\Man_m$. 
This functor realizes a specific instance of 
the left adjoint from \eqref{eq:F!F*}. Finally, 
we explain how to glue AQFTs over a cover of $M \in \ovr{\Man_m}$ 
by computing a suitable colimit, of which we provide 
an explicit model that shall play 
a crucial role in Section~\ref{sec:descent}.

\subsection{\label{subsec:fib-cat}The fibred category of AQFTs}
Consider the orthogonal category $\ovr{\Man_m}$ 
of $m$-dimensional smooth oriented manifolds from Example~\ref{ex:Man} 
and an involutive symmetric monoidal category $\MM$. 
For each $M \in \Man_m$, recall the orthogonal category 
$\ovr{\OO(M)}$ of open subsets of $M$ from Example~\ref{ex:slice}. 
Note that each morphism $f\colon M \to N$ in $\Man_m$ yields 
an orthogonal functor 
\begin{equation}
f\colon \ovr{\OO(M)} \longrightarrow \ovr{\OO(N)} 
\end{equation}
that sends an open subset 
$U \subseteq M$ to its image $f(U) \subseteq N$ 
(which is an open subset because $f$ is an open embedding). 
Combining Definitions~\ref{defi:AQFToperad-functor} 
and~\ref{defi:AQFT} with \eqref{eq:F!F*}, the orthogonal 
functor $f$ induces the extension-restriction adjunction 
\begin{equation}\label{eq:ext-res-Man}
\xymatrix{
f_! \colon \AQFT(\ovr{\OO(M)}) \ar@<2pt>[r] & \AQFT(\ovr{\OO(N)}) \cocolon f^\ast \ar@<2pt>[l]
}.
\end{equation}

Regarding AQFTs as functors that assign 
reversing $\ast$-monoids, see Remark~\ref{rem:AQFTsAsFunctors}, 
the extension-restriction adjunction $f_! \dashv f^\ast$ admits 
a very explicit description. 

The right adjoint functor $f^\ast$, called the restriction along $f$, 
sends an AQFT $\BBB \in \AQFT(\ovr{\OO(N)})$ 
to the AQFT $f^\ast \BBB \in \AQFT(\ovr{\OO(M)})$ 
that assigns to each open subset $U \subseteq M$ the reversing 
$\ast$-monoid $f^\ast \BBB(U) \coloneqq \BBB(f(U)) \in \ast\Monr(\MM)$. 

The left adjoint functor $f_!$, called the extension along $f$, sends an AQFT $\AAA \in \AQFT(\ovr{\OO(M)})$ 
to the AQFT $f_! \AAA \in \AQFT(\ovr{\OO(N)})$ 
that assigns to each open subset $V \subseteq N$ the reversing 
$\ast$-monoid $f_! \AAA(V) \in \ast\Monr(\MM)$ defined as follows: 
when $f(M) \cap V = \emptyset$, i.e.\ the image of $f$ does not meet $V$, 
then $f_! \AAA(V) \in \ast\Monr(\MM)$ is the initial 
reversing $\ast$-monoid; when $f(M) \cap V \neq \emptyset$, 
i.e.\ the image of $f$ meets $V$, 
$f_! \AAA(V) \coloneqq \AAA(f^{-1}(V)) \in \ast\Monr(\MM)$ 
is the reversing $\ast$-monoid that $\AAA$ assigns to the 
non-empty open subset $f^{-1}(V) \subseteq M$. 

The unit of the extension-restriction adjunction is given component-wise, 
for all $\AAA \in \linebreak \AQFT(\ovr{\OO(M)})$, 
as the natural transformation $\AAA \to f^\ast f_! \AAA$ 
between functors from $\OO(M)$ to $\ast\Monr(\MM)$ 
whose component at the open subset $U \subseteq M$ 
is just the identity of $\AAA(U) \in \ast\Monr(\MM)$. 

The counit of the extension-restriction adjunction is given component-wise, 
for all $\BBB \in \AQFT(\ovr{\OO(N)})$, 
as the natural transformation $f_! f^\ast \BBB \to \BBB$ 
between functors from $\OO(N)$ to $\ast\Monr(\MM)$ 
whose component at the open subset $V \subseteq N$ 
is the morphism $f_! f^\ast \BBB (V) \to \BBB(V)$ in $\ast\Monr(\MM)$ 
defined as follows: when $f(M) \cap V = \emptyset$ is empty, 
it is given by the unique morphism 
from the intial reversing $\ast$-monoid to $\BBB(V)$; 
when $f(M) \cap V \neq \emptyset$ is non-empty, it is the morphism 
$\BBB(\subseteq)\colon \BBB(f(f^{-1}(V))) \to \BBB(V)$ in $\ast\Monr(\MM)$ 
that the $\BBB$ assigns 
to the non-empty open subset inclusion 
$f(f^{-1}(V)) \subseteq V$. 

\begin{defi}\label{defi:pi}
Let $\MM$ be an involutive symmetric monoidal category. 
We define the category $\AQFT$ whose objects are pairs $(M,\AAA)$ 
consisting of $M \in \Man_m$ and $\AAA \in \AQFT(\ovr{\OO(M)})$, 
and whose morphisms $(f,\Phi)\colon (M,\AAA) \to (N,\BBB)$ are pairs 
consisting of a morphism $f\colon M \to N$ in $\Man_m$ and a morphism 
$\Phi\colon \AAA \to f^\ast \BBB$ in $\AQFT(\ovr{\OO(M)})$. 
\sk

We define the functor $\pi\colon \AQFT \to \Man_m$ that sends objects 
$(M,\AAA) \in \AQFT$ to $M \in \Man_m$ and morphisms 
$(f,\Phi)\colon (M,\AAA) \to (N,\BBB)$ in $\AQFT$ to $f\colon M \to N$ 
in $\Man_m$. 
\end{defi}

The following simple result explains the title of this section. 

\begin{propo}
The functor $\pi\colon \AQFT \to \Man_m$ from Definition~\ref{defi:pi} 
is a fibred category, that we call the {\it AQFT fibred category}. 
\end{propo}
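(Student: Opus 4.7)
The plan is to exhibit explicit cartesian lifts and verify the universal property of a Grothendieck fibration. For a morphism $f\colon M \to N$ in $\Man_m$ and an object $(N,\BBB) \in \AQFT$ over $N$, the natural candidate for a cartesian lift of $f$ at $(N,\BBB)$ is the morphism
\begin{equation*}
(f,\id)\colon (M,f^\ast \BBB) \longrightarrow (N,\BBB)
\end{equation*}
in $\AQFT$, where $\id\colon f^\ast\BBB \to f^\ast\BBB$ is the identity morphism in $\AQFT(\ovr{\OO(M)})$. This ensures at once that $\pi$ admits at least one lift of $f$ with target $(N,\BBB)$.

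To verify the cartesian property, I would start from an arbitrary morphism $(g,\Psi)\colon (L,\CCC) \to (N,\BBB)$ in $\AQFT$ together with a morphism $h\colon L \to M$ in $\Man_m$ satisfying $f \circ h = g$, and produce a unique $(h,\Xi)\colon (L,\CCC) \to (M,f^\ast\BBB)$ with $(f,\id) \circ (h,\Xi) = (g,\Psi)$. The key observation is that the restriction functors compose strictly: directly from the explicit description of $f^\ast$ and $h^\ast$ in terms of images, one has $h^\ast\circ f^\ast = (f \circ h)^\ast = g^\ast$ as functors $\AQFT(\ovr{\OO(N)}) \to \AQFT(\ovr{\OO(L)})$. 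Since the composition in $\AQFT$ of $(f,\id)$ with $(h,\Xi)$ is $(f\circ h,\, h^\ast(\id)\circ \Xi) = (g,\Xi)$ viewed as a morphism $\CCC \to g^\ast\BBB$, the requirement $(g,\Xi) = (g,\Psi)$ forces $\Xi = \Psi$, so existence and uniqueness hold simultaneously.

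The only conceptual step is thus the strict $2$-functoriality of the assignment $M \mapsto \AQFT(\ovr{\OO(M)})$, $f \mapsto f^\ast$, which I would check before invoking the universal property. Strictness here follows because $f^\ast\BBB(U) = \BBB(f(U))$ is defined literally on the nose, so that $(f\circ h)^\ast\BBB(V) = \BBB(f(h(V))) = f^\ast\BBB(h(V)) = h^\ast f^\ast \BBB(V)$ for all open $V \subseteq L$, and analogously on morphisms of $\AQFT(\ovr{\OO(N)})$. Once this is in place, the whole claim is essentially the content of the Grothendieck construction applied to the strict pseudofunctor $\Man_m^{\op} \to \Cat$, $M \mapsto \AQFT(\ovr{\OO(M)})$; I would state this identification explicitly and conclude that $\pi$ is a Grothendieck fibration with the cartesian morphisms described above.

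The main (and minor) obstacle is bookkeeping: one must be careful that the composition of morphisms in $\AQFT$ as defined in \ref{defi:pi} is well-defined and associative, and that the identifications $h^\ast f^\ast \BBB = (f\circ h)^\ast \BBB$ hold on the nose rather than merely up to coherent isomorphism. Both are immediate from the formulas for $f^\ast$ recalled just before Definition~\ref{defi:pi}, so no genuine difficulty arises.
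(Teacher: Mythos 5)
Your proposal is correct and follows essentially the same route as the paper: both exhibit $(f,\id)\colon (M,f^\ast\BBB)\to(N,\BBB)$ as the Cartesian lift and observe that the unique filler over $h$ is just the given morphism reinterpreted via the strict identity $h^\ast f^\ast = (f\circ h)^\ast$. The paper states this more tersely, while you additionally spell out the strict functoriality of restriction that makes the uniqueness argument immediate.
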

\begin{proof}
The goal is to show that, for all $g\colon M \to N$ in $\Man_m$ and 
$(N,\BBB) \in \AQFT$, the morphism 
$(g,\id)\colon (M,g^\ast \BBB) \to (N,\BBB)$ in $\AQFT$ is Cartesian. 
Indeed, for all $f:L \to M$ in $\Man_m$ and 
all $(g f,\Phi)\colon (L,\AAA) \to (N,\BBB)$ in $\AQFT$, 
the unique morphism 
$\chi\colon (L,\AAA) \to (M,g^\ast \BBB)$ in $\AQFT$ 
such that $\pi(\chi) = f$ and $(g,\id) \circ \chi = (g f,\Phi)$, 
is given by $\chi \coloneqq (f,\Phi)$. 
\end{proof}

\subsection{\label{subsec:gluing}Gluing AQFTs on an open cover}
Section~\ref{subsec:fib-cat} defines the fibred category 
$\pi\colon \AQFT \to \Man_m$, whose fiber over $M \in \Man_m$ 
is the category $\pi^{-1}(M) = \AQFT(\ovr{\OO(M)})$ 
of AQFTs on the orthogonal category $\ovr{\OO(M)}$ 
of open subsets of $M$, see Example~\ref{ex:slice}. 
Furthermore, \eqref{eq:ext-res-Man} 
presents an explicit extension-restriction adjunction. 

Consider now also an open cover 
$\MMM = \{M_\alpha\}$ of an object $M \in \Man_m$. 
The goal of this section is to present a canonical construction that {\it glues} 
{\it local} AQFTs $\AAA_\alpha \in \AQFT(\ovr{\OO(M_\alpha)})$ 
on $\ovr{\OO(M_\alpha)}$ (i.e.\ defined on the covering patches
$M_\alpha$) that agree on overlaps to a {\it global} AQFT 
$\glue(\{\AAA_\alpha\}) \in \AQFT(\ovr{\OO(M)})$ on $\ovr{\OO(M)}$ 
(i.e.\ on the whole of $M$). 

Note that this construction 
does {\it not} merely glue, for each open subsets $U \subseteq M$, 
the local algebras $\AAA_{\alpha}(U \cap M_\alpha)$ 
(to be interpreted as the 
initial reversing $\ast$-monoid when $U$ does not meet $M_\alpha$) 
to a global algebra. 
Instead, it glues the local AQFTs to a global AQFT and, 
in doing so, our gluing construction implements the AQFT axioms, 
which is in general not the case when gluing 
the local algebras. 

Furthermore, as we will see in Section~\ref{sec:descent}, 
our gluing construction solves the issue of not being able to 
recover the global data from the local data associated with an AQFT 
even for an extremely simple example of a $1$-dimensional AQFT, as
already mentioned in the introduction. 

Our gluing construction consists of an adjunction  
\begin{equation}\label{eq:glue-data}
\xymatrix{
\glue\colon \Desc(\MMM) \ar@<2pt>[r] & \AQFT(\ovr{\OO(M)}) \cocolon \data \ar@<2pt>[l]
}. 
\end{equation}
defined on a suitable category $\Desc(\MMM)$ of 
{\it AQFT descent data} subordinate to the open cover $\MMM$ of $M$, 
consisting of local AQFTs that agree on overlaps. 
This is formalized more accurately in Definition~\ref{defi:Desc} below. 
\sk

To define the category $\Desc(\MMM)$ of AQFT descent data 
subordinate to an open cover $\MMM = \{M_\alpha\}$ of $M$, 
we specialize the concept of category of descent data 
of \cite[Sec.~4.1]{Vistoli_2005_GrothendieckTopologies} 
to the fibred category $\pi\colon \AQFT \to \Man_m$. 
For this purpose let 
$(-)\vert^{U^\prime} \dashv (-)\vert_U\colon \AQFT(\ovr{\OO(U^\prime)}) \to \AQFT(\ovr{\OO(U)})$ 
denote the extension-restriction adjunction \eqref{eq:ext-res-Man}  
associated with the $\Man_m$-morphism given by an inclusion of 
non-empty open subsets $U \subseteq U^\prime$ of $M$.  
Furthermore, denote $n$-fold overlaps of the cover $\MMM$ by 
$M_{\alpha_1 \ldots \alpha_n} \coloneqq \bigcap_{i=1}^n M_{\alpha_i}$. 

\begin{conv}
Note that an $n$-fold overlap $M_{\alpha_1 \ldots \alpha_n}$ 
of a cover $\MMM$ may be empty. In this case, it should be omitted 
from the constructions below. Removing empty overlaps, however, 
leads to a quite clumsy notation. Instead, we adopt the following 
more convenient convention. 
When $\emptyset \subseteq M$ is the empty open subset, 
$\AQFT(\ovr{\O(\emptyset)}) \coloneqq \{\ast\}$ is the terminal category 
(consisting of a single object $\ast$ 
and its identity morphism $\id_\ast$). 
When $\emptyset \subseteq U^\prime$ is the inclusion of 
the empty open subset inside an open subset $U^\prime \subseteq M$, 
the restriction $(-)\vert_{\emptyset}\colon \AQFT(\ovr{\OO(U^\prime)}) \to \AQFT(\ovr{\OO(\emptyset)})$ 
is the unique functor to the terminal category 
(that sends all objects to $\ast$ and all morphisms to $\id_\ast$), 
and the extension $(-)\vert^{U^\prime}\colon \AQFT(\ovr{\OO(\emptyset)}) \to \AQFT(\ovr{\OO(U^\prime)})$ 
is the functor that assigns the initial AQFT 
(i.e.\ the AQFT that constantly assigns the initial reversing $\ast$-monoid) 
if $U^\prime \neq \emptyset$ is a non-empty open subset and 
the object $\ast$ if $U^\prime = \emptyset$ 
is the empty open subset. 
\end{conv}

\begin{defi}\label{defi:Desc}
The category $\Desc(\MMM)$ of AQFT descent data on a 
cover $\MMM = \{M_\alpha\}$ of $M$ is defined as follows. 
\begin{enumerate}[label=(\alph*)]
\item Objects $\AAAA = (\{\AAA_\alpha\},\{\aaa_{\alpha\beta}\})$ 
consist of a collection of AQFTs 
\begin{subequations}
\begin{equation}
\AAA_\alpha \in \AQFT(\ovr{\OO(M_\alpha)})
\end{equation}
on $\ovr{\OO(M_\alpha)}$ on the patches $M_\alpha$ for all $\alpha$, 
and a collection of isomorphisms 
\begin{equation}\label{eq:overlap-isos}
\aaa_{\alpha\beta}\colon \AAA_\alpha \vert_{M_{\alpha\beta}} \overset{\cong}{\longrightarrow} \AAA_\beta \vert_{M_{\alpha\beta}}
\end{equation}
in $\AQFT(\ovr{\OO(M_{\alpha\beta}})$ on the double overlaps 
$M_{\alpha\beta}$ for all $\alpha,\beta$, 
subject to the conditions 
\begin{align}
\aaa_{\alpha\alpha} = \id, && \aaa_{\beta\gamma} \vert_{M_{\alpha\beta\gamma}} \circ \aaa_{\alpha\beta} \vert_{M_{\alpha\beta\gamma}} = \aaa_{\alpha\gamma} \vert_{M_{\alpha\beta\gamma}}
\end{align}
\end{subequations}
for all $\alpha,\beta,\gamma$. 
\item Morphisms $\PPhi = \{\Phi_\alpha\}\colon \AAAA \to \BBBB$ 
consist of a collection of morphisms 
\begin{subequations}
\begin{equation}
\Phi_\alpha\colon \AAA_\alpha \longrightarrow \BBB_\alpha
\end{equation}
in $\AQFT(\ovr{\OO(M_\alpha)})$ on the patches $M_\alpha$, 
for all $\alpha$, subject to the conditions 
\begin{equation}
\bbb_{\alpha\beta} \circ \Phi_\alpha \vert_{M_{\alpha\beta}} = \Phi_\beta \vert_{M_{\alpha\beta}} \circ \aaa_{\alpha\beta}, 
\end{equation}
\end{subequations}
for all $\alpha,\beta$. 
\end{enumerate}
Together with the evident identity morphisms and composition law, 
the objects and morphisms defined above form the category $\Desc(\MMM)$. 
\end{defi}

The candidate right adjoint functor 
\begin{subequations}\label{eq:data}
\begin{equation}
\data\colon \AQFT(\ovr{\OO(M)}) \longrightarrow \Desc(\MMM)
\end{equation}
from \eqref{eq:glue-data} assigns to an AQFT 
$\AAA \in \AQFT(\ovr{\OO(M)})$ the descent datum 
\begin{equation}
\data(\AAA) \coloneqq (\{\AAA \vert_{M_\alpha}\},\{\id\}) \in \Desc(\MMM) 
\end{equation} 
and to a morphism $\Phi\colon \AAA \to \BBB$ in $\AQFT(\ovr{\OO(M)})$ 
the morphism 
\begin{equation}
\data(\Phi) \coloneqq \{\Phi \vert_{M_\alpha}\}\colon \data(\AAA) \longrightarrow \data(\BBB)
\end{equation}
\end{subequations}
in $\Desc(\MMM)$. 
\sk

In order to define also the candidate left adjoint functor 
\begin{subequations}\label{eq:glue}
\begin{equation}
\glue\colon \Desc(\MMM) \longrightarrow \AQFT(\ovr{\OO(M)}), 
\end{equation}
recall that the category $\AQFT(\ovr{\OO(M)})$ is cocomplete 
(see Proposition~\ref{propo:astAlg}, Example~\ref{ex:slice} 
and Definition~\ref{defi:AQFT}) and consider 
the extension-restriction adjunction 
$(-)\vert^{U^\prime} \dashv (-)\vert_U\colon \AQFT(\ovr{\OO(U^\prime)}) \to \AQFT(\ovr{\OO(U)})$  
associated with the inclusion of 
open subsets $U \subseteq U^\prime$ of $M$. 

With these preparations, the action of $\glue$ on a descent datum 
$\AAAA = (\{\AAA_\alpha\},\{\aaa_{\alpha\beta}\})$ 
is defined as the reflexive coequalizer 
\begin{equation}\label{eq:glue-formula}
\glue(\AAAA) \coloneqq \colim \left( \xymatrix{ \coprod_{\alpha\beta} \AAA_{\alpha} \vert_{M_{\alpha\beta}} \vert^M \ar@<4pt>[r] \ar@<-4pt>[r] & \coprod_\alpha \AAA_\alpha \vert^M \ar[l] } \right) \in \AQFT(\ovr{\OO(M)}), 
\end{equation}
where the top arrow is defined from the component 
$\AAA_{\alpha} \vert_{M_{\alpha\beta}} \vert^{M_\alpha} \to \AAA_{\alpha}$ in $\AQFT(\ovr{\OO(M_\alpha)})$
of the counit of the extension-restriction adjunction, 
the bottom arrow is defined combining the isomorphism 
$\aaa_{\alpha\beta}\colon \AAA_\alpha \vert_{M_{\alpha\beta}} \to \AAA_{\beta} \vert_{M_{\alpha\beta}}$ in $\AQFT(\ovr{\OO(M_{\alpha\beta})})$
and the component
$\AAA_{\beta} \vert_{M_{\alpha\beta}} \vert^{M_\beta} \to \AAA_{\beta}$ in $\AQFT(\ovr{\OO(M_\beta)})$ 
of the counit of the extension-restriction adjunction, and 
the middle arrow is defined by the identity 
$\AAA_\alpha = \AAA_\alpha \vert_{M_{\alpha\alpha}}$. 
\end{subequations}

The action of $\glue$ on a morphism 
$\PPhi = \{\Phi_\alpha\}\colon \AAAA \to \BBBB$ in $\Desc(\MMM)$ 
is obtained combining the functoriality of coequalizers and 
the commutative diagram 
\begin{equation}
\xymatrix{
\coprod_{\alpha\beta} \AAA_{\alpha} \vert_{M_{\alpha\beta}} \vert^M \ar@<4pt>[r] \ar@<-4pt>[r] \ar[d]_-{\coprod_{\alpha\beta} \Phi_\alpha \vert_{M_{\alpha\beta}} \vert^M} & \coprod_\alpha \AAA_\alpha \vert^M \ar[d]^-{\coprod_{\alpha} \Phi_\alpha \vert^M} \ar[l] \\ 
\coprod_{\alpha\beta} \BBB_{\alpha} \vert_{M_{\alpha\beta}} \vert^M \ar@<4pt>[r] \ar@<-4pt>[r] & \coprod_\alpha \BBB_\alpha \vert^M \ar[l]
}
\end{equation}
in $\AQFT(\ovr{\OO(M)})$. The square involving 
the top arrows commutes as a consequence of the naturality 
of the extension-restriction adjunction counit. 
Commutativity of the square involving the bottom arrows 
relies also on the compatibility between $\PPhi$ and the 
collections $\{\aaa_{\alpha\beta}\}$, $\{\bbb_{\alpha\beta}\}$, 
see Definition~\ref{defi:Desc}. 
The square involving the middle arrows commutes due to 
the identity $\Phi_\alpha\vert_{M_{\alpha\alpha}} = \Phi_\alpha$. 
\sk 

The unit 
\begin{subequations}\label{eq:glue-data-unit}
\begin{equation}
\id \longrightarrow \data \circ \glue 
\end{equation}
of the adjunction \eqref{eq:glue-data} consists, for each 
$\AAAA = \{\AAA_\alpha,\aaa_{\alpha\beta}\} \in \Desc(\MMM)$, 
of the component 
\begin{equation}
\AAAA \longrightarrow \data(\glue(\AAAA)) 
\end{equation}
in $\Desc(\MMM)$ defined component-wise, 
for each $\alpha$, as the composite morphism 
\begin{equation}
\AAA_\alpha \longrightarrow \AAA_\alpha \vert^M \vert_{M_\alpha} \longrightarrow \glue(\AAAA) \vert_{M_\alpha} 
\end{equation}
\end{subequations}
in $\AQFT(\ovr{\OO(M_\alpha)})$, where the first morphism is the 
$\AAA_\alpha$-component of the unit of the extension-restriction 
adjunction $(-)\vert^{M} \dashv (-)\vert_{M_\alpha}$ 
and the second morphism is the restriction $(-)\vert_{M_\alpha}$ 
of the canonical morphism that includes $\AAA_\alpha\vert^M$ 
in the coequalizer $\glue(\AAA)$, see \eqref{eq:glue}. 
\sk 

Finally, the counit 
\begin{subequations}\label{eq:glue-data-counit}
\begin{equation}
\glue \circ \data \longrightarrow \id
\end{equation}
of the adjunction \eqref{eq:glue-data} consists, for each 
$\AAA \in \AQFT(\ovr{\OO(M)})$, of the component
\begin{equation}
\glue(\data(\AAA)) \longrightarrow \AAA 
\end{equation} 
in $\AQFT(\ovr{\OO(M)})$, which is defined 
as the canonical morphism arising via the universal property 
of the coequalizer from the diagram 
\begin{equation}
\xymatrix{
\coprod_{\alpha\beta} \AAA \vert_{M_{\alpha\beta}} \vert^M \ar@<2pt>[r] \ar@<-2pt>[r] & \coprod_\alpha \AAA \vert_{M_\alpha} \vert^M \ar[r] & \AAA
}
\end{equation}
\end{subequations}
in $\AQFT(\ovr{\OO(M)})$, where the morphism on the right-hand side 
is given component-wise, for each $\alpha$, by the counit 
of the extension-restriction adjunction 
$(-)\vert^{M} \dashv (-)\vert_{M_\alpha}$ and the parallel pair 
on the left-hand side is defined as in \eqref{eq:glue}. 
(To check that the top and bottom compositions coincide, recall 
that the coherence morphisms of $\data(\AAA) \in \Desc(\MMM)$ 
are identities by definition and that restrictions compose strictly.) 
\sk 

From the triangle identities of the extension-restriction 
adjunction it follows that the functor $\data$ in \eqref{eq:data}, 
the functor $\glue$ in \eqref{eq:glue}, 
the unit \eqref{eq:glue-data-unit} and the counit 
\eqref{eq:glue-data-counit} form an adjunction, 
as claimed in \eqref{eq:glue-data}. 

\begin{rem}\label{rem:glue-data-counit}
Incidentally, let us observe that the unit 
\eqref{eq:glue-data-unit} of the adjunction $\glue \dashv \data$ 
in \eqref{eq:glue-data} is a natural isomorphism, 
therefore one can identify the category of descent data 
$\Desc(\MMM)$ as a coreflective full subcategory 
of $\AQFT(\ovr{\OO(M)})$. Although this fact may be verified 
directly from the above construction of the unit, 
we do not pursue this aspect here in detail 
because this is not used in the rest of the paper. 
Nevertheless, it is worth mentioning that this observation 
justifies the interpretation of any component of the counit 
\eqref{eq:glue-data-counit} of the adjunction $\glue \dashv \data$ 
as a comparison morphism $\AAA^{\MMM} \to \AAA$ in 
$\AQFT(\ovr{\OO(M)})$ that relates a given AQFT 
$\AAA \in \AQFT(\ovr{\OO(M)})$ and the associated AQFT 
$\AAA^{\MMM} \coloneqq \glue(\data(\AAA)) \in \AQFT(\ovr{\OO(M)})$ 
constructed by gluing the local data coming from $\AAA$ 
subordinate to an open cover $\MMM$ of $M$. 
One may think of $\AAA$ as being \textit{determined by local data} 
when the comparison morphisms $\AAA^{\MMM} \to \AAA$ 
in $\AQFT(\ovr{\OO(M)})$ are isomorphisms, 
for all open covers $\MMM$ of $M$. This naturally directs attention to 
the full subcategory of $\AQFT(\ovr{\OO(M)})$ consisting of those 
AQFTs $\AAA$ that are determined by local data, namely such that 
the comparison morphisms $\AAA^{\MMM} \to \AAA$ are isomorphisms 
for all open covers $\MMM$ of $M$. The full subcategories 
of locally determined AQFTs over $M$, for all $M$, 
play a key role in \cite{BeniniGrant-StuartSchenkel_2024_HaagKastlerStacks}. 
It is explained there that such categories assemble into a $2$-functor 
(equivalently, a fibered category), which (under mild assumptions) 
is actually a stack. More explicitly, one can show that 
(by considering AQFTs defined only on relatively compact, instead of all, 
open subsets of $M$) the category of locally determined AQFTs over $M$ 
is itself determined by local data.
\end{rem}

In preparation for Section~\ref{sec:descent}, 
let us develop a more explicit description of the glued AQFT 
\begin{equation}\label{eq:glue-explicit}
\glue(\AAAA) \in \AQFT(\ovr{\OO(M)}), 
\end{equation}
for $\AAAA = \{\AAA_\alpha,\aaa_{\alpha\beta}\} \in \Desc(\MMM)$ 
any descent datum subordinate to an open cover $\MMM = \{M_\alpha\}$ 
of $M \in \Man_m$. This description relies crucially 
on Propositions~\ref{propo:astObj} and~\ref{propo:astAlg} 
and on the models for coequalizers and coproducts 
from \eqref{eq:astAlg-coeq} and \eqref{eq:astAlg-coprod}. 
\begin{enumerate}[label=(\alph*)]
\item For each $U \in \ovr{\OO(M)}$, the underlying object 
\begin{subequations}\label{eq:glue-explicit-obj}
\begin{equation}
\glue(\AAAA)(U) \in \MM
\end{equation} 
of the glued AQFT $\glue(\AAAA) \in \AQFT(\ovr{\OO(M)})$ 
is computed by the colimit of the diagram 
\begin{equation}\label{eq:glue-explicit-obj-b}
\xymatrix{
& L \Big( \coprod_\alpha R L R \big( \AAA_\alpha \vert^M \big) \Big) (U)\ar@<-2pt>[d] \ar@<2pt>[d] \\
L \Big( \coprod_{\alpha\beta} R \big( \AAA_{\alpha} \vert_{M_{\alpha\beta}} \vert^M \big) \Big) (U) \ar@<-2pt>[r] \ar@<2pt>[r] & L \Big( \coprod_\alpha R \big( \AAA_\alpha \vert^M \big) \Big) (U)
}
\end{equation}
\end{subequations}
in $\MM$. Here, the vertical parallel pair implements 
the coproduct over $\alpha$ that appears in \eqref{eq:glue-formula} 
and it comes from \eqref{eq:astAlg-coprod}; in particular, it follows from 
\eqref{eq:free-O-action} and \eqref{eq:free-forget-counit} 
that one vertical arrow is associated 
with the compositions of the AQFT colored $\ast$-operad 
$\O_{\ovr{\OO(M)}} \in \astOp$ and the other vertical arrow is 
associated with the $\O_{\ovr{\OO(M)}}$-actions of 
$\AAA_\alpha \vert^M \in \AQFT(\ovr{\OO(M)})$ for all $\alpha$. 
(Informally, one can think of the vertical parallel pair 
as being responsible for the object $\glue(\AAAA)(U) \in \MM$ to encode 
the information coming from the local AQFTs 
$\AAA_\alpha \in \AQFT(\ovr{\OO(M_\alpha)})$ for all patches $M_\alpha$.)
The horizontal parallel pair, instead, implements 
the parallel pair in \eqref{eq:glue-formula}; 
in particular, one horizontal arrow comes from the component 
$\AAA_\alpha \vert_{M_\alpha\beta} \vert^{M_\alpha} \to \AAA_\alpha$ 
in $\AQFT(\ovr{\OO(M_\alpha)})$ of the extension-restriction 
adjunction counit and the other horizontal arrow comes from 
the combination of the isomorphism 
$\aaa_{\alpha\beta}\colon \AAA_\alpha \vert_{M_{\alpha\beta}} \to \AAA_\beta \vert_{M_{\alpha\beta}}$ 
in $\AQFT(\ovr{\OO(M_{\alpha\beta})})$ with the component 
$\AAA_\beta \vert_{M_\alpha\beta} \vert^{M_\beta} \to \AAA_\beta$ 
in $\AQFT(\ovr{\OO(M_\beta)})$ of the extension-restriction 
adjunction counit. (Informally, one can think of the horizontal parallel 
pair as being responsible for the object $\glue(\AAAA)(U) \in \MM$ 
to implement the relations encoded by the comparison isomorphisms 
$\aaa_{\alpha\beta}$ for all double overlaps $M_{\alpha\beta}$.)

\item For each $U \in \ovr{\OO(M)}$ and each tuple $\und{U}$ 
in $\ovr{\OO(M)}$ of length $n \geq 0$, 
the $\O_{\ovr{\OO(M)}}$-action 
\begin{equation}\label{eq:glue-explicit-action}
\alpha_{\glue(\AAAA)}\colon \O_{\ovr{\OO(M_\alpha)}}(\substack{U\\ \und{U}}) \otimes \displaystyle\bigotimes_{i=1}^n \glue(\AAAA)(U_i) \longrightarrow \glue(\AAAA)(U)
\end{equation}
in $\MM$ of the glued AQFT $\glue(\AAAA) \in \AQFT(\ovr{\OO(M)})$ 
is obtained from the compositions of the AQFT colored 
$\ast$-operad $\O_{\ovr{\OO(M)}} \in \astOp$. 
This follows from \eqref{eq:free-O-action}. 

\item For each $U \in \ovr{\OO(M)}$, the $\ast$-involution 
\begin{equation}\label{eq:glue-explicit-inv}
\ast_{\glue(\AAAA)}\colon \glue(\AAAA)(U) \longrightarrow J(\glue(\AAAA)(U)) 
\end{equation}
in $\MM$ of the the glued AQFT 
$\glue(\AAAA) \in \AQFT(\ovr{\OO(M)})$ 
is obtained from the $\ast$-involution of the AQFT colored 
$\ast$-operad $\O_{\ovr{\OO(M)}} \in \astOp$ and 
the $\ast$-involutions of the AQFTs 
$\AAA_\alpha \in \AQFT(\ovr{\OO(M_\alpha)})$ for all $\alpha$. 
This follows from \eqref{eq:free-O-involution}. 
\end{enumerate}

%%%%%%%%%%%%%%%%%%%%%%%%%%%%%%%%%%%%%%%%%%%%%%%%
%%%%%%%%%%%%%%%%%%%%%%%%%%%%%%%%%%%%%%%%%%%%%%%%

\section{\label{sec:descent}Probing descent for linear AQFTs on manifolds}
From now on we work in the involutive symmetric monoidal category 
$\MM = \Vec_\bbC$ of $\bbC$-vector spaces, 
see Example~\ref{ex:Vec}. As it is customary, 
we denote the category of $\ast$-algebras, 
i.e.\ reversing $\ast$-monoids in $\Vec_\bbC$, 
see Remark~\ref{rem:AQFTsAsFunctors}, 
by $\astAlg_\bbC \coloneqq \ast\Monr(\Vec_\bbC)$. 
Our goal is to test the efficacy of the gluing construction 
for AQFTs from Section~\ref{subsec:gluing}. 

We perform this test in a simple scenario, 
namely that of $(4 \ell + 1)$-dimensional AQFTs, 
for $\ell \in \bbZ_{\geq 0}$ a non-negative number, 
i.e.\ AQFTs (in $\Vec_\bbC$) defined on the orthogonal 
category $\ovr{\Man_{4 \ell + 1}}$ of $(4 \ell + 1)$-dimensional 
oriented smooth manifolds and orientation preserving 
open embeddings, see Example~\ref{ex:Man}. 

In order to test our gluing construction, 
we consider probes $\AAA \in \AQFT(\ovr{\Man_{4 \ell + 1}})$ 
that are constructed out of geometric data. 
(For $\ell = 0$, our probes generalize the free chiral boson 
on a light ray from $2$-dimensional conformal quantum field theory.) 

First, we recall that, even in such a simple scenario, 
``naively'' gluing the local algebras 
$\AAA(M_\alpha) \in \astAlg_\bbC$ that such probe AQFTs 
$\AAA \in \AQFT(\ovr{\Man_{4 \ell + 1}})$ assign to the patches 
of an open cover $\{M_\alpha\}$ of a manifold $M$ 
typically fails to recover the global algebra 
$\AAA(M) \in \astAlg_\bbC$. 

Second, we show that gluing the local AQFTs 
$\AAA_{M_\alpha} \coloneqq \AAA \circ \iota_{M_\alpha} \in \AQFT(\ovr{\OO(M_\alpha)})$ 
according to our gluing construction from Section~\ref{subsec:gluing} 
successfully recovers the global AQFT 
$\AAA_M \coloneqq \AAA \circ \iota_M \in \AQFT(\ovr{\OO(M)})$. 
(Recall that $\iota_N\colon \ovr{\OO(N)} \to \ovr{\Man_{4 \ell + 1}}$ 
denotes the orthogonal functor that sends an open subset $U$ 
of an oriented smooth manifold $N \in \Man_{4 \ell + 1}$ to $U$ itself regarded 
as an oriented smooth manifold with the structure induced by the inclusion $U \subseteq M$, see Example~\ref{ex:slice}.) 
We interpret this fact as follows: 
the local algebras forget a lot of the information 
encoded by the original AQFT 
(including functoriality and Einstein causality), 
resulting in particular in a poorly behaved 
``naive'' gluing construction; 
in contrast to that, taking as input the local AQFTs 
$\AAA_{M_\alpha} \in \AQFT(\ovr{\OO(M_\alpha)})$, 
our gluing construction from Section~\ref{subsec:gluing} 
retains much more of the information encoded 
by the original AQFT $\AAA \in \AQFT(\ovr{\Man_{4 \ell + 1}})$, 
and this suffices to recover the global AQFT 
$\AAA_M \in \AQFT(\ovr{\OO(M)})$.

\subsection{\label{subsec:probe}Probe AQFT}
This section constructs the probes $\AAA \in \AQFT(\ovr{\Man_{4 \ell + 1}})$ 
that shall be used to test our gluing construction 
from Section~\ref{subsec:gluing}. 
The construction of the probe $\AAA$ takes 
a natural vector bundle $\sfE$ endowed with 
a natural fiber metric $\langle -,- \rangle$ 
and a natural metric connection $\nabla$ as input. 
\sk

A {\it natural vector bundle} $\sfE\colon \Man_{4 \ell + 1} \to \VBun_\bbR$ 
is a section of the evident projection functor 
$\pi\colon \VBun_\bbR \to \Man_{4 \ell + 1}$ defined on the category 
$\VBun_\bbR$ whose objects $(M,E)$ are $\bbR$-vector bundles $E$ 
of finite rank over an oriented smooth manifold $M$ and whose morphisms 
$(f,\overline{f})\colon (M_1,E_1) \to (M_2,E_2)$ 
are vector bundle maps $\overline{f}\colon E_1 \to E_2$ 
that cover an open embedding $f\colon M_1 \to M_2$ 
and that are fiber-wise isomorphisms. 
We shall use the notation $\sfE(M) = (M,E_M) \in \VBun_\bbR$ and $\sfE(f) = (f,E_f)\colon (M_1,E_{M_1}) \to (M_2,E_{M_2})$ 
in $\VBun_\bbR$. 

The prime example is the natural trivial line bundle 
$\und{\bbR}\colon \Man_1 \to \VBun_\bbR$ defined by 
$\und{\bbR}(M) \coloneqq (M,M \times \bbR) \in \VBun_\bbR$ and 
$\und{\bbR}(f) \coloneqq (f,f \times \id)\colon \und{\bbR}(M_1) \to \und{\bbR}(M_2)$ in $\VBun_\bbR$. 
More details on natural vector bundles 
can be found in \cite[App.~A]{BeniniMusanteSchenkel_2024_QuantizationLorentzian}.

For our purposes, the key feature of a natural vector bundle $\sfE$ 
is that it allows to assign to an open embedding 
$f:M_1 \to M_2$ in $\Man_1$ the pull-back morphism 
\begin{equation}
f^\ast\colon \Omega^k(M_2,E_{M_2}) \longrightarrow \Omega^k(M_1,E_{M_1})
\end{equation} 
in $\Vec_\bbR$ for smooth $k$-forms and the push-forward morphism 
\begin{equation}
f_\ast\colon \Omega^k_\cc(M_1,E_{M_1}) \longrightarrow \Omega^k_\cc(M_2,E_{M_2})
\end{equation} 
in $\Vec_\bbR$ for compactly supported smooth $k$-forms. 

In particular, one obtains 
the $\sfE$-valued $k$-forms functor 
\begin{equation}
\Omega^k(\sfE)\colon \Man_{4 \ell + 1}^\op \longrightarrow \Vec_\bbR
\end{equation}
that assigns to an oriented smooth manifold $M \in \Man_{4 \ell + 1}$ 
the vector space $\Omega^k(M,E_M) \in \Vec_\bbR$ 
of $E_M$-valued smooth $k$-forms on $M$ and 
to an open embedding $f\colon M_1 \to M_2$ in $\Man_{4 \ell + 1}$ 
the pull-back morphism $f^\ast\colon \Omega^k(M_2,E_{M_2}) \to \Omega^k(M_1,E_{M_1})$ 
in $\Vec_\bbR$. 

Similarly, one obtains the compactly supported 
$\sfE$-valued $k$-forms functor 
\begin{equation}
\Omega^k_\cc(\sfE)\colon \Man_{4 \ell + 1} \longrightarrow \Vec_\bbR
\end{equation}
that assigns to an oriented smooth manifold $M \in \Man_{4 \ell + 1}$ 
the vector space $\Omega^k_\cc(M,E_M) \in \Vec_\bbR$ 
of compactly supported $E_M$-valued smooth $k$-forms on $M$ 
and to an open embedding $f\colon M_1 \to M_2$ in $\Man_{4 \ell + 1}$ 
the push-forward morphism 
$f_\ast\colon \Omega^k(M_1,E_{M_1}) \to \Omega^k(M_2,E_{M_2})$ in $\Vec_\bbR$. 
\sk

A {\it natural fiber metric} $\langle -,- \rangle$ 
on a natural vector bundle $\sfE$ is a natural transformation 
$\langle -,- \rangle\colon \sfE \otimes \sfE \to \und{\bbR}$ 
whose components 
$\langle -,- \rangle_M\colon E_M \otimes E_M \to M \otimes \bbR$ 
are fiber metrics, i.e.\ fiber-wise non-degenerate symmetric 
bilinear forms, for all $M \in \Man_{4 \ell + 1}$. 
More details on natural fiber metrics 
can be found in \cite[App.~A]{BeniniMusanteSchenkel_2024_QuantizationLorentzian}. 

For our purposes, 
the key feature of a natural fiber metric $\langle -,- \rangle$ 
on $\sfE$ is that it allows to define, for $k = 0, \ldots, 4 \ell + 1$, 
the natural integration pairings 
\begin{equation}\label{eq:int-pairing}
(-,-)\colon \Omega^k_\cc(\sfE) \otimes \Omega^{4 \ell + 1 - k}_\cc(\sfE) \longrightarrow \bbR 
\end{equation}
as the natural transformation whose components, 
for all $M \in \Man_1$, are defined by the formula 
$(\varphi,\psi)_M \coloneqq \int_M \langle \varphi \wedge \psi \rangle_M$, 
for all $\varphi \in \Omega^k_\cc(M,E_M)$ and 
$\psi \in \Omega^{4 \ell + 1 - k}_\cc(M,E_M)$. 
(In \eqref{eq:int-pairing} $\bbR$ is interpreted as the constant 
functor $\Man_{4 \ell + 1} \ni M \mapsto \bbR \in \Vec_\bbR$.) 
\sk

A {\it natural metric connection} $\nabla$ 
on a natural vector bundle $\sfE$ endowed with 
a natural fiber metric $\langle -,- \rangle$ 
is a natural transformation 
$\nabla\colon \Omega^0(\sfE) \to \Omega^1(\sfE)$ 
whose components $\nabla_M\colon \Omega^0(M,E_M) \to \Omega^1(M,E_M)$ 
are metric connections, i.e.\ 
$\mathrm{d} \circ \langle -,- \rangle_M = \langle -,- \rangle_M \circ (\nabla_M \otimes \id + \id \otimes \nabla_M)$, 
for all $M \in \Man_{4 \ell + 1}$. 
As usual, $\nabla$ is canonically extended to $k$-forms by 
$\nabla_M (\varphi\, \alpha) \coloneqq (\nabla_M \varphi)\, \alpha + \varphi\, \dd \alpha$, 
for all $M \in \Man_{4 \ell + 1}$, $\varphi \in \Omega^0(M,E_M)$ 
and $\alpha \in \Omega^k(M)$. 
In particular, it follows that 
$\dd\, \langle \varphi \wedge \psi \rangle_M = \langle \nabla_M \varphi \wedge \psi \rangle_M + (-1)^i \langle \varphi \wedge \nabla_M \psi \rangle_M$, 
for all $\varphi \in \Omega^i(M,E_M)$ 
and $\psi \in \Omega^j(M,E_M)$. Clearly, both $\nabla$ 
and its extensions to $k$-forms preserve supports. 
In particular, they can be restricted to compactly supported sections, 
resulting in the natural transformations 
$\nabla\colon \Omega^k_\cc(\sfE) \to \Omega^{k+1}_\cc(\sfE)$. 

For our purposes, the key feature of a natural metric 
connection $\nabla$ on $(\sfE,\langle -,- \rangle)$ is that, 
for all $\varphi \in \Omega^k_\cc(M,E_M)$ 
and $\psi \in \Omega^{4 \ell - k}_\cc(M,E_M)$, one has 
\begin{equation}\label{eq:tau-asymm}
(-1)^k (\varphi, \nabla \psi) = \int_M \dd\, \langle \varphi \wedge \psi \rangle_M - (\nabla \varphi, \psi)_M = - (-1)^{(k + 1) (4 \ell - k)} (\psi, \nabla \varphi)_M = - (\psi, \nabla \varphi)_M, 
\end{equation}
where in the first step we used the compatibility between 
$\nabla_M$ and $\langle -,- \rangle_M$, in the second step 
we used Stokes' theorem and the symmetry of 
$\langle -,- \rangle_M$, and in the last step we observed that 
$4 \ell$ and $k (k + 1)$ are even numbers. 
In particular setting $k = 2 \ell$ 
in \eqref{eq:tau-asymm} entails that 
\begin{equation}\label{eq:tau}
\tau \coloneqq (-,-) \circ (\id \otimes \nabla)\colon \Omega^{2 \ell}_\cc(\sfE) \otimes \Omega^{2 \ell}_\cc(\sfE) \longrightarrow \bbR. 
\end{equation} 
is anti-symmetric and hence a Poisson structure. 
\sk

The probe AQFT 
\begin{subequations}\label{eq:probe-AQFT}
\begin{equation}
\AAA \in \AQFT(\ovr{\Man_{4 \ell + 1}})
\end{equation} 
that we shall use arises from the canonical commutation relations 
quantizing the natural Poisson structure $\tau$ from \eqref{eq:tau}. 
In the spirit of Remark~\ref{rem:AQFTsAsFunctors}, 
it is convenient to construct $\AAA$ as a $\astAlg_\bbC$-valued 
functor on $\Man_{4 \ell + 1}$ subject to the causality axiom. 

Explicitly, for all oriented smooth manifolds $M \in \Man_{4 \ell + 1}$, 
we consider the $\ast$-algebra 
\begin{equation}
\AAA(M) \coloneqq T_\bbC \Omega^{2 \ell}_\cc(M,E_M) \Big/ \big\langle \varphi \otimes \psi - \psi \otimes \varphi - i\, \tau_M(\varphi \otimes \psi) \big\rangle \in \astAlg_\bbC, 
\end{equation}
where $T_\bbC\colon \Vec_\bbC \to \astAlg_\bbC$ denotes 
the free $\ast$-algebra functor, the complexification of 
$\Omega^{2 \ell}_\cc(M,E_M) \in \Vec_\bbR$ is implicit 
and $\langle S \rangle$ denotes the two-sided 
$\ast$-ideal generated by the set $S$. 
This means that the $\ast$-algebra $\AAA(M)$ is generated 
by all $\varphi \in \Omega^{2 \ell}_\cc(M,E_M)$, subject to 
the \textit{canonical commutation relations} 
\begin{equation}
\varphi \otimes \psi - \psi \otimes \varphi = i\, \tau_M(\varphi \otimes \psi),
\end{equation}
\end{subequations}
for all $\varphi,\psi \in \Omega^{2 \ell}_\cc(M,E_M)$. 

It follows from the functoriality of $T_\bbC$ 
and the naturality of $\tau$ that the assignment 
$\Man_{4 \ell + 1} \ni M \mapsto \AAA(M) \in \astAlg_\bbC$ 
admits an evident extension to a functor 
$\AAA\colon \Man_{4 \ell + 1} \to \astAlg_\bbC$. 
Furthermore, the causality axiom holds true because 
$\tau$ vanishes for disjoint supports: 
$\tau(f_{1\ast} \varphi_1 \otimes f_{2\ast} \varphi_2) = 0$ 
for all $\varphi_1 \in \Omega^{2 \ell}_\cc(M_1,E_{M_1})$ and 
$\varphi_2 \in \Omega^{2 \ell}_\cc(M_2,E_{M_2})$ whenever 
$f_1\colon M_1 \to M \leftarrow M_2 \cocolon f_2$ is an orthogonal pair 
in $\ovr{\Man_{4 \ell + 1}}$. This concludes the construction of 
the probe AQFT $\AAA \in \AQFT(\ovr{\Man_{4 \ell + 1}})$ 
associated with a natural vector bundle $\sfE$ endowed with 
a natural fiber metric $\langle -,- \rangle$ and 
a natural metric connection $\nabla$.

\subsection{\label{subsec:alg-descent}Local algebras do not recover the global algebra}
It was observed in \cite[Appendix]{BeniniSchenkel_2019_HigherStructures} 
that for the scalar field on the circle $\bbS^1$ 
gluing the local algebras on an open cover of $\bbS^1$
recovers the global algebra on $\bbS^1$ 
if and only if the circle $\bbS^1$ itself is a member of the cover. 
This means that even very simple examples of AQFTs 
$\AAA \in \AQFT(\ovr{\Man_1})$ do not behave as cosheaves 
$\AAA\colon \Man_1 \to \astAlg_\bbC$ 
(with respect to a reasonably large class of covers). 
In more physical terms, it is impossible to recover 
the algebra of observables of a scalar field 
on the circle $\bbS^1$ from the knowledge 
of the algebras of observables subordinate to an open cover 
(unless the circle $\bbS^1$ itself is a member of the open cover). 

We shall see in Section~\ref{subsec:aqft-descent} below 
that this issue is solved by gluing the local AQFTs, 
instead of the local algebras. But before doing so, we would like 
to recall (and generalize slightly) the above-mentioned observation from 
\cite[Appendix]{BeniniSchenkel_2019_HigherStructures}. 
\sk

Consider an AQFT $\AAA \in \AQFT(\ovr{\Man_{4 \ell + 1}})$ 
on the oriented smooth manifold $M \in \Man_{4 \ell + 1}$ and an open cover 
$\MMM = \{M_\alpha\}$ of $M$. $\AAA$ assigns to each pair 
of inclusions $M_\alpha \subseteq M \supseteq M_\beta$ 
the commutative square 
\begin{equation}
\xymatrix{
\AAA(M_{\alpha\beta}) \ar[r] \ar[d] & \AAA(M_\alpha) \ar[d] \\ 
\AAA(M_\beta) \ar[r] & \AAA(M)
}
\end{equation}
in $\astAlg_\bbC$. 
As a consequence, one obtains the canonical comparison morphism 
\begin{equation}\label{eq:cosheaf-mor}
G\colon \AAA[\MMM] \coloneqq \colim \left( 
\xymatrix@C=1.5em{
\coprod_{\alpha\beta} \AAA(M_{\alpha\beta}) \ar@<2pt>[r] \ar@<-2pt>[r] & \coprod_\alpha \AAA(M_\alpha) 
}
\right) \longrightarrow \AAA(M) 
\end{equation}
in $\astAlg_\bbC$ out of the coequalizer. (Compare the latter 
displayed equation to the counit \eqref{eq:glue-data-counit} 
of the adjunction $\glue \dashv \data$, see also Remark \ref{rem:glue-data-counit}.) 
The functor $\AAA\colon \Man_{4 \ell + 1} \to \astAlg_\bbC$ 
is a cosheaf (with respect to open covers) 
when \eqref{eq:cosheaf-mor} is an isomorphism in $\astAlg_\bbC$, 
for all oriented smooth manifolds $M \in \Man_{4 \ell + 1}$ and all open covers 
$\MMM = \{M_\alpha\}$ of $M$. 
Informally, this means that gluing the local 
algebras $\AAA(M_\alpha) \in \astAlg_\bbC$ recovers 
the global algebra $\AAA(M) \in \astAlg_\bbC$. 
\sk

The colimit $\AAA[\MMM] \in \astAlg_\bbC$ 
from \eqref{eq:cosheaf-mor} admits the following 
explicit description, see Section~\ref{subsec:probe} 
and \cite[Appendix]{BeniniSchenkel_2019_HigherStructures}: 
\begin{subequations}\label{eq:glue-alg}
\begin{equation}
\AAA[\MMM] \cong T_\bbC \big( \xymatrix{\bigoplus_\alpha \Omega^{2 \ell}_\cc(M_\alpha,E_{M_\alpha})} \big) \Big/ \big\langle \iota_\alpha \varphi \otimes \iota_\alpha \psi - \iota_\alpha \psi \otimes \iota_\alpha \varphi - i\, \tau_{M_\alpha}(\varphi \otimes \psi),\; \iota_\alpha \rho - \iota_\beta \rho \big\rangle \in \astAlg_\bbC, 
\end{equation}
where $\iota_\alpha\colon \Omega^{2 \ell}_\cc(M_\alpha,E_{M_\alpha}) \to \bigoplus_\alpha \Omega^{2 \ell}_\cc(M_\alpha,E_{M_\alpha})$ 
in $\Vec_\bbR$ denotes the canonical inclusion 
of the $\alpha$-summand. 
This means that the $\ast$-algebra $\AAA[\MMM] \in \astAlg_\bbC$ is generated 
by all $\iota_\alpha \varphi \in \bigoplus_\alpha \Omega^{2 \ell}_\cc(M_\alpha, \linebreak E_{M_\alpha})$ 
subject to: 
\begin{enumerate}[label=(\roman*)]
\item the \textit{$\MMM$-subordinate canonical commutation relations} 
\begin{equation}
\iota_\alpha \varphi \otimes \iota_\alpha \psi - \iota_\alpha \psi \otimes \iota_\alpha \varphi = i \tau_{M_\alpha}(\varphi \otimes \psi)
\end{equation}
for all indices $\alpha$ and all $\varphi,\psi \in \Omega^{2 \ell}_\cc(M_\alpha,E_{M_\alpha})$,
\item the \textit{overlap relations} 
\begin{equation}
\iota_\alpha \rho = \iota_\beta \rho
\end{equation}
for all indices 
$\alpha, \beta$ and all $\rho \in \Omega^{2 \ell}_\cc(M_{\alpha\beta},E_{M_{\alpha\beta}})$.
\end{enumerate}
\end{subequations}

\begin{theo}\label{th:alg}
Let $\AAA \in \AQFT(\ovr{\Man_{4 \ell + 1}})$ be any of the probe AQFTs 
from Section~\ref{subsec:probe} and $\MMM = \{M_\alpha\}$ 
a finite open cover of an oriented smooth manifold $M$. Then the comparison 
morphism $\AAA[\MMM] \to \AAA(M)$ in $\astAlg_\bbC$ 
from \eqref{eq:cosheaf-mor} is an isomorphism if and only if 
$M$ is a member of the open cover $\MMM$, 
i.e.\ there exists an index $\gamma$ such that $M_\gamma = M$. 
\end{theo}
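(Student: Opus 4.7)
The plan is to handle the two directions separately, with the substantive argument reserved for the converse.

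For the ``if'' direction, assume $M_\gamma = M$ for some $\gamma$. Then $M_\alpha \cap M_\gamma = M_\alpha$ for every $\alpha$, so the overlap relations $\iota_\alpha \rho = \iota_\gamma \rho$ apply to every $\rho \in \Omega^{2\ell}_\cc(M_\alpha, E_{M_\alpha})$; hence every generator of $\AAA[\MMM]$ belongs to the image of $\iota_\gamma$, and the patch-wise CCR is subsumed by the CCR on $M_\gamma = M$. The composition $\AAA(M) = \AAA(M_\gamma) \to \AAA[\MMM] \xrightarrow{G} \AAA(M)$ is readily verified to be the identity, so the first factor is a two-sided inverse of $G$.

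For the converse, assume no $M_\gamma$ equals $M$. The idea is to exhibit an element of $\AAA[\MMM]$ that lies in $\ker G$ yet is nonzero. I would first choose indices $\alpha \neq \beta$ and nonzero sections $\varphi \in \Omega^{2\ell}_\cc(M_\alpha, E_{M_\alpha})$, $\psi \in \Omega^{2\ell}_\cc(M_\beta, E_{M_\beta})$ such that $\supp(\varphi) \cap \supp(\psi) = \emptyset$ in $M$ and $\supp(\varphi) \cup \supp(\psi)$ is not contained in any single patch $M_\gamma$. The existence of such data is a point-set topology lemma: finiteness of $\MMM$ together with the assumption $M_\gamma \neq M$ for all $\gamma$ lets one find, for each $\gamma$, a point $x_\gamma \in M \setminus M_\gamma$; the finite set $\{x_\gamma\}$ is not contained in any single patch, and a suitable splitting of an enlargement of this set (combined with support-extensions chosen via partitions of unity) yields the required $\varphi, \psi$. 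With such $\varphi, \psi$ fixed, disjointness of supports together with locality of $\nabla$ give $\tau_M(\varphi, \psi) = 0$, hence $G([\iota_\alpha \varphi, \iota_\beta \psi]) = [\varphi, \psi] = i \tau_M(\varphi, \psi) \cdot 1 = 0$ in $\AAA(M)$.

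The heart of the converse is then to prove that $[\iota_\alpha \varphi, \iota_\beta \psi]$ does not vanish in $\AAA[\MMM]$. I would filter both $\AAA[\MMM]$ and $\AAA(M)$ by tensor-word length (generators in degree one), observe that each CCR relation and each overlap relation is filtration-compatible, and pass to the associated graded. Then $\mathrm{gr}\,\AAA(M) \cong S^\bullet \Omega^{2\ell}_\cc(M, E_M)$ is fully commutative, while $\mathrm{gr}\,\AAA[\MMM]$ is only \emph{patch-wise commutative}: the relation $\varphi \otimes \psi = \psi \otimes \varphi$ is imposed precisely when $\supp(\varphi) \cup \supp(\psi)$ lies in some single $M_\gamma$. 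By the construction of $\varphi, \psi$ no such $M_\gamma$ exists, so the class of $[\iota_\alpha \varphi, \iota_\beta \psi]$ in $\mathrm{gr}^2\,\AAA[\MMM]$ is not forced to be zero, while its image in $S^2 \Omega^{2\ell}_\cc(M, E_M) = \mathrm{gr}^2\,\AAA(M)$ vanishes.

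The main technical obstacle is the rigorous non-vanishing in $\mathrm{gr}^2\,\AAA[\MMM]$, which I would establish by identifying $\mathrm{gr}\,\AAA[\MMM]$ with the amalgamated free product of the symmetric algebras $S^\bullet \Omega^{2\ell}_\cc(M_\gamma, E_{M_\gamma})$ along the overlaps $S^\bullet \Omega^{2\ell}_\cc(M_{\gamma\delta}, E_{M_{\gamma\delta}})$ and invoking the normal-form theorem for such amalgamated products, which detects cross-patch commutators of generators that do not jointly factor through any common subalgebra. An alternative is to construct a faithful enough glued Fock-type representation of $\AAA[\MMM]$ in which the chosen commutator manifestly acts non-trivially. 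Either route reduces the claim to an explicit calculation in an algebra with a known basis; careful bookkeeping of the overlap identifications, compatibly with a fixed partition of unity, is where most of the work lies.
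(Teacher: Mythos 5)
Your overall architecture differs from the paper's. The paper does not exhibit a kernel element directly: it chooses a partition of unity $\{\chi_\alpha\}$, defines a section $H\colon T_\bbC\,\Omega^{2\ell}_\cc(M,E_M)\to\AAA[\MMM]$, $\varphi\mapsto\sum_\alpha\iota_\alpha(\chi_\alpha\,\varphi)$, checks that $G\circ H$ fixes generators, and deduces from the injectivity of $G$ that the \emph{global} canonical commutation relation must already hold in $\AAA[\MMM]$ for all $\varphi$ supported in $M_\alpha$ and $\psi$ supported in $M_\beta$; since $\AAA[\MMM]$ implements only the patch-wise relations and the overlap identifications, this forces every pair $M_\alpha,M_\beta$ to admit a common successor $M_\gamma\supseteq M_\alpha\cup M_\beta$, and finiteness of $\MMM$ then gives $M\in\MMM$ by iteration. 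Your contrapositive route (produce disjointly supported $\varphi,\psi$ whose supports are not jointly contained in any patch, and show that $[\iota_\alpha\varphi,\iota_\beta\psi]$ is a nonzero element of $\ker G$) is a legitimate alternative in principle, and your ``if'' direction is fine.

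There are, however, two gaps. The smaller one is the point-set lemma: the finite set $\{x_\gamma\}$ with $x_\gamma\in M\setminus M_\gamma$ need not split into two pieces each contained in a \emph{single} patch, so it does not directly yield $\varphi$ supported in one $M_\alpha$ and $\psi$ in one $M_\beta$. You should first observe that if every pair of patches had a common successor, finiteness would already force $M\in\MMM$; hence some pair $\alpha,\beta$ satisfies $M_\alpha\cup M_\beta\not\subseteq M_\gamma$ for all $\gamma$, and sorting witnesses $y_\gamma\in(M_\alpha\cup M_\beta)\setminus M_\gamma$ into $M_\alpha$ or $M_\beta$ produces the desired supports. (This is precisely the combinatorial core of the paper's argument, so it cannot be bypassed.) The essential gap is the non-vanishing of $[\iota_\alpha\varphi,\iota_\beta\psi]$ in $\AAA[\MMM]$, which you correctly isolate but do not prove. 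The overlap relations allow $\varphi$ and $\psi$ to be decomposed linearly and relabelled across patches, and one must rule out that every cross term then lands in a common patch; moreover $\AAA[\MMM]$ is a colimit over the diagram of \emph{all} double overlaps of a finite cover, not a single amalgamation $A\ast_C B$, so the classical normal-form theorem you invoke does not apply off the shelf, and the identification of $\mathrm{gr}\,\AAA[\MMM]$ with the corresponding colimit of symmetric algebras itself requires a PBW-type argument (associated graded does not commute with colimits in general). Until one of your two proposed tools is actually constructed, the converse direction is not established. To be fair, the paper is also terse at the corresponding point, asserting that the listed relations cannot reproduce the global commutation relation and deferring to \cite[Prop.~A.1]{BeniniSchenkel_2019_HigherStructures}; but your write-up replaces that assertion with named machinery that, as stated, does not yet do the job.
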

\begin{proof}
If $M$ is a member of the cover $\MMM$, it follows 
immediately that the comparison morphism $\AAA[\MMM] \to \AAA(M)$ 
in $\astAlg_\bbC$ is an isomorphism. 
(For instance, use the explicit description of 
$\AAA[\MMM] \in \astAlg_\bbC$ from \eqref{eq:glue-alg}.) 
\sk

To prove also the converse, mimicking 
\cite[Prop.~A.1]{BeniniSchenkel_2019_HigherStructures}, 
we first show that, 
if the morphism $\AAA[\MMM] \to \AAA(M)$ in $\astAlg_\bbC$ 
from \eqref{eq:cosheaf-mor} is an isomorphism, 
then, for all indices $\alpha, \beta$, there exists 
a ``common successor'' $\gamma$ such that $M_\alpha \subseteq M_\gamma \supseteq M_\beta$ 
contains both $M_\alpha$ and  $M_\beta$. 

To show this fact, choose a partition of unity $\{\chi_\alpha\}$ 
subordinate to the open cover $\MMM$ of $M$ and define the morphism 
\begin{equation}\label{eq:inverse}
H\colon T_\bbC\, \Omega^{2 \ell}_\cc(M,E_M) \longrightarrow \AAA[\MMM]
\end{equation}
in $\astAlg_\bbC$ as the morphism that sends each generator $\varphi \in \Omega^{2 \ell}_\cc(M,E_M)$ to $\sum_\alpha \iota_\alpha(\chi_\alpha \varphi) \in \AAA[\MMM]$.

One checks that the composition $G \circ H$ of \eqref{eq:cosheaf-mor} 
and \eqref{eq:inverse} sends a generator $\varphi$ of $T_\bbC\, \Omega^{2 \ell}_\cc(M,E_M)$
to itself, but now regarded as a generator of $\AAA(M)$. 
Taking also into account the canonical commutation 
relations from \eqref{eq:probe-AQFT}, it follows that, for $\varphi \in \Omega^{2 \ell}_\cc(M_\alpha,E_{M_\alpha})$ 
and $\psi \in \Omega^{2 \ell}_\cc(M_\beta,E_{M_\beta})$, 
the composition $G \circ H$ sends the element 
$\varphi \otimes \psi - \varphi \otimes \psi - i \tau_M(\varphi \otimes \psi)$ 
to $0 \in \AAA(M)$. 
Since $G$ is an isomorphism per hypothesis 
(in particular, it is an injective map), 
it follows that $H$ must vanish on 
$\varphi \otimes \psi - \varphi \otimes \psi - i \tau_M(\varphi \otimes \psi) \in T_\bbC \Omega^{2 \ell}_\cc(M,E_M)$, i.e.\ 
\begin{equation}\label{eq:CCR-gluealg}
H(\varphi) \otimes H(\psi) - H(\varphi) \otimes H(\psi) = i \tau_M(\varphi \otimes \psi) \in \AAA[\MMM] . 
\end{equation}

On the other hand, $\AAA[\MMM]$ implements only the $\MMM$-subordinate canonical commutation relations 
and the overlap relations from \eqref{eq:glue-alg}. 
This entails that \eqref{eq:CCR-gluealg} can hold 
for all $\varphi \in \Omega^{2 \ell}_\cc(M_\alpha,E_{M_\alpha})$ 
and $\psi \in \Omega^{2 \ell}_\cc(M_\beta,E_{M_\beta})$ 
only if the following property holds: 
there exists an index $\gamma$ such that 
$M_\alpha \subseteq M_\gamma \supseteq M_\beta$. 
Since the open cover $\MMM$ is finite, 
iterating the above-mentioned property, 
in finitely many steps we find an index $\gamma$ such that 
$M = \bigcup_\alpha M_\alpha = M_\gamma$. 
\end{proof} 

\begin{rem}
Note that an analog of Theorem \ref{th:alg} holds for 
any open cover $\MMM$ of a compact oriented smooth manifold $M \in \Man_{4 \ell + 1}$. 
Indeed, since $M$ is compact, $\MMM$ admits a finite subcover 
labeled by a finite subset $I$ of the set of indices of $\MMM$. 
Therefore, it suffices to consider $M = \bigcup_{\alpha \in I} M_\alpha$ 
in the last part of the proof of Theorem \ref{th:alg}. 
In particular, $\ell = 0$ and $M = \bbS^1$ is precisely the case 
considered in \cite[Appendix]{BeniniSchenkel_2019_HigherStructures}. 
\end{rem}

\subsection{\label{subsec:aqft-descent}Local AQFTs recover the global AQFT} 
Theorem~\ref{th:alg} entails that gluing the local 
algebras of observables often fails to recover the global one, 
even for the simple probe AQFTs constructed 
in Section~\ref{subsec:probe}. More precisely, for our probe AQFTs 
the global algebra on any oriented smooth manifold $M \in \Man_{4 \ell + 1}$ 
is recovered via gluing the local ones 
subordinate to a finite open cover $\MMM$ of $M$ if and only if 
$M$ is a member of $\MMM$. 
In contrast to that, the present section shows 
that for the probe AQFTs from Section~\ref{subsec:probe} 
the global AQFT on any oriented smooth manifold $M \in \Man_{4 \ell + 1}$ 
is recovered via gluing the local ones subordinate to any open cover 
of $M$, as explained in Section~\ref{subsec:gluing}. 
\sk

In preparation for the statement of Theorem~\ref{th:AQFT} below, 
recall from Example \ref{ex:slice} the orthogonal functor 
$\iota_M\colon \ovr{\OO(M)} \to \ovr{\Man_{4 \ell + 1}}$ 
in $\OCat$ that endows a non-empty open subset $U \subseteq M$ 
with the restriction of the orientation and smooth manifold structure of $M \in \Man_{4 \ell + 1}$. 
Combining Definitions~\ref{defi:AQFTmultifunctor} 
and~\ref{defi:AQFT} with \eqref{eq:F!F*} leads to a functor 
$\iota_M^\ast\colon \AQFT(\ovr{\Man_{4 \ell + 1}}) \to \AQFT(\ovr{\OO(M)})$ 
that restricts AQFTs on $\ovr{\Man_{4 \ell + 1}}$ to $\ovr{\OO(M)}$. 
In the statement below, we shall consider the AQFT 
$\AAA_M \coloneqq \iota_M^\ast(\AAA) \in \AQFT(\ovr{\OO(M)})$ 
that is obtained by restricting a given AQFT 
$\AAA \in \AQFT(\ovr{\Man_{4 \ell + 1}})$ along 
the orthogonal functor $\iota_M$. 

\begin{theo}\label{th:AQFT}
Let $\AAA \in \AQFT(\ovr{\Man_{4 \ell + 1}})$ be any of the probe AQFTs 
from Section~\ref{subsec:probe}, 
$M \in \Man_{4 \ell + 1}$ an oriented smooth manifold 
and $\MMM = \{M_\alpha\}$ an open cover of $M$. 
Consider the AQFT $\AAA_M \coloneqq \iota_M^\ast(\AAA) \in \AQFT(\ovr{\OO(M)})$, 
as defined in the previous paragraph. 
Then the $\AAA_M$-component $\glue(\data(\AAA_M)) \to \AAA_M$ 
in $\AQFT(\ovr{\OO(M)})$ of the counit of the adjunction 
$\glue \dashv \data\colon \AQFT(\ovr{\OO(M)}) \to \Desc(\MMM)$, 
see \eqref{eq:glue-data} and also \eqref{eq:glue-data-counit}, 
is an isomorphism. 
\end{theo}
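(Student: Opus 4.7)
The plan is to prove that the counit $\varepsilon_{\AAA_M}\colon \glue(\data(\AAA_M)) \to \AAA_M$ is an isomorphism by constructing an explicit inverse $H\colon \AAA_M \to \glue(\data(\AAA_M))$ in $\AQFT(\ovr{\OO(M)})$. Fix once and for all a partition of unity $\{\chi_\alpha\}$ on $M$ subordinate to the cover $\MMM = \{M_\alpha\}$, and for each $U \in \OO(M)$ write $V_\alpha \coloneqq U \cap M_\alpha$, so that $\{\chi_\alpha\vert_U\}$ is a partition of unity on $U$ subordinate to $\{V_\alpha\}$. Unpacking the explicit model \eqref{eq:glue-explicit-obj}--\eqref{eq:glue-explicit-inv} together with \eqref{eq:astAlg-coprod},\eqref{eq:astAlg-coeq}, the $\ast$-algebra $\glue(\data(\AAA_M))(U)$ is generated by the images of canonical $\ast$-algebra morphisms $\jmath_\alpha^U\colon \AAA(V_\alpha) \to \glue(\data(\AAA_M))(U)$ (natural in $U$), subject to the overlap relations $\jmath_\alpha^U(\sigma) = \jmath_\beta^U(\sigma)$ for $\sigma \in \AAA(V_{\alpha\beta})$. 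Crucially, since $\glue(\data(\AAA_M)) \in \AQFT(\ovr{\OO(M)})$, its causality axiom forces images from $\glue(\data(\AAA_M))(W_1)$ and $\glue(\data(\AAA_M))(W_2)$ to commute in $\glue(\data(\AAA_M))(U)$ for any pair of disjoint open subsets $W_1, W_2 \subseteq U$. The counit $\varepsilon_U$ then sends $\jmath_\alpha^U(\sigma)$ to the image of $\sigma \in \AAA(V_\alpha)$ in $\AAA(U)$ via AQFT functoriality.

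I would define the candidate inverse by $H_U(\varphi) \coloneqq \sum_\alpha \jmath_\alpha^U(\chi_\alpha \varphi)$ for generators $\varphi \in \Omega^{2\ell}_\cc(U, E_U)$, extending as a $\ast$-algebra morphism (the well-definedness through the CCR quotient is the main verification, addressed below). Assuming well-definedness, $\varepsilon_U \circ H_U = \id_{\AAA(U)}$ holds on generators because $\sum_\alpha \chi_\alpha \varphi = \varphi$; and $H_U \circ \varepsilon_U = \id$ holds on the generators $\jmath_\alpha^U(\sigma)$ of $\glue(\data(\AAA_M))(U)$, for $\sigma \in \Omega^{2\ell}_\cc(V_\alpha, E_{V_\alpha})$, because each $\chi_\beta \sigma$ is supported in $V_{\alpha\beta}$, so the overlap relation gives $\jmath_\beta^U(\chi_\beta \sigma) = \jmath_\alpha^U(\chi_\beta \sigma)$ and the sum collapses to $\jmath_\alpha^U(\sum_\beta \chi_\beta \sigma) = \jmath_\alpha^U(\sigma)$. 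Naturality of $H$ in $U$ follows from naturality of each $\jmath_\alpha$ and the fact that the restricted partitions are compatible under inclusions of open subsets.

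The main obstacle is to verify that $H_U$ respects the canonical commutation relations, namely $\sum_{\alpha, \beta} [\jmath_\alpha^U(\chi_\alpha \varphi), \jmath_\beta^U(\chi_\beta \psi)] = i\, \tau_U(\varphi, \psi)\, \oone$, which by bilinearity reduces to showing $[\jmath_\alpha^U(\chi_\alpha \varphi), \jmath_\beta^U(\chi_\beta \psi)] = i\, \tau_U(\chi_\alpha \varphi, \chi_\beta \psi)\, \oone$ for each pair $(\alpha, \beta)$. When $V_{\alpha\beta} = \emptyset$, both sides vanish: the right-hand side by locality of $\tau$, the left-hand side by causality of the AQFT $\glue(\data(\AAA_M))$ applied to the disjoint $V_\alpha, V_\beta$. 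When $V_{\alpha\beta} \neq \emptyset$, I would choose a cutoff $\rho_{\alpha\beta} \in C^\infty_\cc(V_{\alpha\beta})$ equal to $1$ on a neighborhood of the compact set $\supp(\chi_\alpha \varphi) \cap \supp(\chi_\beta \psi) \subseteq V_{\alpha\beta}$ and split $\chi_\alpha \varphi = \rho_{\alpha\beta} \chi_\alpha \varphi + (1 - \rho_{\alpha\beta}) \chi_\alpha \varphi$. The first summand is supported in $V_{\alpha\beta}$, so overlap lets me replace $\jmath_\alpha^U$ by $\jmath_\beta^U$ and compute the commutator inside $\AAA(V_\beta)$ using its CCR; this yields $i\, \tau_U(\chi_\alpha \varphi, \chi_\beta \psi)\, \oone$ because $\nabla$ preserves supports and $\rho_{\alpha\beta} = 1$ wherever the integrand $\langle \chi_\alpha \varphi \wedge \nabla(\chi_\beta \psi) \rangle$ is non-zero.

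For the second summand $(1 - \rho_{\alpha\beta}) \chi_\alpha \varphi$, whose support is disjoint from $\supp(\chi_\beta \psi)$, the key step is to invoke AQFT causality of $\glue(\data(\AAA_M))$ itself: separate these two disjoint compact sets by disjoint open neighborhoods $W_1 \subseteq V_\alpha$ and $W_2 \subseteq V_\beta$ (possible since $U$ is a Hausdorff manifold and both compact sets already lie in the respective $V_\alpha$, $V_\beta$), and use the naturality of $\jmath_\alpha, \jmath_\beta$ to factor the corresponding generators through $\glue(\data(\AAA_M))(W_1)$ and $\glue(\data(\AAA_M))(W_2)$; their commutator in $\glue(\data(\AAA_M))(U)$ then vanishes by causality. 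Summing all componentwise identities over $(\alpha, \beta)$ and using $\sum_\alpha \chi_\alpha = 1$ on $U$ delivers the desired CCR, so that $H_U$ is well-defined, and the family $\{H_U\}_{U \in \OO(M)}$ assembles into a morphism $H\colon \AAA_M \to \glue(\data(\AAA_M))$ in $\AQFT(\ovr{\OO(M)})$, inverse to $\varepsilon_{\AAA_M}$ as required.
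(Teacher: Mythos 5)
Your proposal is correct and follows essentially the same strategy as the paper's proof: an explicit inverse built from a partition of unity, with well-definedness through the CCR quotient checked by splitting each commutator into a piece supported in the overlap $U \cap M_{\alpha\beta}$ (handled via the gluing relations and the local canonical commutation relations) and a piece with disjoint supports (killed by the causality of the glued AQFT), and the two composite identities verified on generators via the overlap relations and $\sum_\alpha \chi_\alpha = 1$. The only differences are presentational — you invoke causality functorially where the paper uses the operadic identity $\mu_U\,(\iota_{V}^U,\iota_{V^\prime}^U) = \mu^\op_U\,(\iota_{V}^U,\iota_{V^\prime}^U)$, and you cut off $\chi_\alpha\varphi$ near the intersection of supports where the paper splits $\chi_\beta\psi$ with a two-element partition of unity — and these are equivalent; just be careful that your phrase ``generated \ldots\ subject to the overlap relations'' must not be read as a presentation of $\glue(\data(\AAA_M))(U)$ (that would be the naive algebra gluing of Theorem~\ref{th:alg}), though your argument only uses generation, the validity of those relations, and causality, so nothing breaks.
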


\begin{rem}
Recalling Remark~\ref{rem:glue-data-counit}, the morphism 
$\glue(\data(\AAA_M)) \to \AAA_M$ in $\AQFT(\ovr{\OO(M)})$ 
can be interpreted as a comparison morphism that relates the given AQFT 
$\AAA_M \in \AQFT(\ovr{\OO(M)})$ to the associated AQFT 
$\glue(\data(\AAA_M)) \in \AQFT(\ovr{\OO(M)})$ consisting of 
local data subordinate to a fixed cover $\MMM$ of $M$. 
In the language of Remark~\ref{rem:glue-data-counit}, 
Theorem~\ref{th:AQFT} states that, given a probe AQFT 
$\AAA \in \AQFT(\ovr{\Man_{4 \ell + 1}})$, the AQFT 
$\AAA_M \in \AQFT(\ovr{\OO(M)})$ is determined by local data. 
In other words, for all 
$M \in \Man_{4 \ell + 1}$ and all open covers $\MMM = \{M_\alpha\}$ of $M$, 
the global AQFT $\AAA_M \in \AQFT(\ovr{\OO(M)})$ 
is recovered by gluing the local AQFTs 
$\AAA_M \vert_{M_\alpha} = \AAA_{M_\alpha} \in \AQFT(\ovr{\OO(M_\alpha)})$. 
This statement should be contrasted with Theorem~\ref{th:alg}, 
which states that gluing
the local algebras $\AAA(M_\alpha)$ fails to recover 
the global algebra $\AAA(M)$ unless 
$M \in \MMM$ is a member of the finite open cover $\MMM$ of $M$. 
\end{rem}

In preparation for the proof of Theorem~\ref{th:AQFT}, 
it is useful to provide a concrete description 
of the glued AQFT $\glue(\AAAA) \in \AQFT(\ovr{\OO(M)})$, 
where $M \in \Man_{4 \ell + 1}$ is an oriented smooth manifold, 
$\MMM = \{M_\alpha\}$ is an open cover of $M$ and 
$\AAAA = \{\AAA_\alpha,\aaa_{\alpha\beta}\} \in \Desc(\MMM)$ 
is a descent datum subordinate to $\MMM$. 
As a starting point, we consider the description at the end 
of Section~\ref{subsec:gluing}, see \eqref{eq:glue-explicit}, 
\eqref{eq:glue-explicit-obj}, \eqref{eq:glue-explicit-action} 
and \eqref{eq:glue-explicit-inv}. In particular, we shall 
take advantage of the fact that the involutive symmetric monoidal 
category $\Vec_\bbC$ is concrete. 
This makes it possible to construct AQFTs (as $\ast$-algebras 
over a colored $\ast$-operad) by listing generators and relations. 

Explicitly, the glued AQFT 
\begin{equation}
\glue(\AAAA) \in \AQFT(\ovr{\OO(M)})
\end{equation}
admits the following concrete description:
\begin{enumerate}[label=(\alph*)]
\item For each $U \in \ovr{\OO(M)}$, consider the vector space 
\begin{equation}\label{eq:glue-concrete}
\glue(\AAAA)(U) = \left( \coprod_{n \geq 0}\; \coprod_{\vert \und{U} \vert = n}\; \coprod_{\vert \und{\alpha} \vert = n} \left( \O_{\ovr{\OO(M)}} \big( \substack{U\\ \und{U}} \big) \otimes \bigotimes_{i=1}^n \AAA_{\alpha_i} \vert^M (U_i) \right)_{\Sigma_n} \middle/ \sim \right) \in \Vec_\bbC. 
\end{equation}
Before taking the quotient by the relations $\sim$, 
this is the vector space corresponding to the lower-right corner of \eqref{eq:glue-explicit-obj-b}. 
Generators (as a vector space), labeled by length 
$n \geq 0$, $n$-tuple $\und{U} = (U_1,\ldots,U_n)$ 
of open subsets of $M$ and $n$-tuple 
$\und{\alpha} = (\alpha_1,\ldots,\alpha_n)$ of indices, 
are denoted by 
\begin{equation}\label{eq:generators}
\iota_{n,\und{U},\und{\alpha}} (o \otimes \und{a}), 
\end{equation}
where $o \in \O_{\ovr{\OO(M)}} \big( \substack{U\\ \und{U}} \big)$ 
is an operation and $\und{a} = a_1 \otimes \cdots \otimes a_n$ 
consists of observables $a_i \in \AAA_{\alpha_i}\vert^M(U_i)$, 
$i=1,\ldots,n$. The subscript $(-)_{\Sigma_n}$ denotes 
the vector space of coinvariants under the $\Sigma_n$-action induced by the operadic 
permutation action and by the symmetric braiding of $\Vec_\bbC$. 
The quotient implements the relations $\sim$ listed below, 
which correspond to computing the colimit of the 
diagram \eqref{eq:glue-explicit-obj-b}. 

The vertical parallel pair is implemented by the relations 
\begin{subequations}\label{eq:relations}
\begin{equation}\label{eq:relations-action}
\iota_{k, \und{\und{U}}, \und{\und{\alpha}}} (o\, \und{o} \otimes \und{\und{a}}) =_{\sim} \iota_{n,\und{U},\und{\alpha}} (o \otimes \und{o}\, \und{\und{a}}), 
\end{equation}
where $k \coloneqq \sum_i k_i$, 
$\und{\und{U}} \coloneqq (\und{U}_1, \ldots, \und{U}_n)$, 
$\und{\und{\alpha}} \coloneqq (\und{\alpha}_1^{k_1}, \ldots, \und{\alpha}_n^{k_n})$, 
$\und{\alpha}_i^{k_i} = (\alpha_i, \ldots, \alpha_i)$ 
(constant tuple of length $k_i$), $i = 1, \ldots, n$, 
and $\und{o}\, \und{\und{a}} \coloneqq o_1\, \und{a}_1 \otimes \cdots \otimes o_n\, \und{a}_n$, 
for all lengths $n \geq 1$ and $k_i \geq 0$, $i = 1, \ldots, n$, 
all $n$-tuples $\und{U} = (U_1, \ldots, U_n)$ and $k_i$-tuples 
$\und{U}_i = (U_{i 1}, \ldots, U_{i k_i})$, $i = 1, \ldots, n$, 
of open subsets of $M$, all $n$-tuples of indices 
$\und{\alpha} = (\alpha_1, \ldots, \alpha_n)$, all operations 
$o \in \O_{\ovr{\OO(M)}} \big( \substack{U\\ \und{U}} \big)$ and 
$o_i \in \O_{\ovr{\OO(M)}} \big( \substack{U_i\\ \und{U}_i} \big)$, 
$i = 1, \ldots, n$, and all observables 
$a_{ij} \in \AAA_{\alpha_i}\vert^M(U_{ij})$, 
$i = 1, \ldots, n$, $j = 1, \ldots , k_i$. 
These relations ``balance'' the operadic composition 
$o\, \und{o} \coloneqq \gamma_{\O_{\ovr{\OO(M)}}}(o \otimes \und{o})$ 
and the $\O_{\ovr{\OO(M)}}$-actions 
$o_i\, \und{a}_i \coloneqq \alpha_{\AAA_{\alpha_i}}(o_i \otimes \und{a}_i)$, 
for all $i = 1, \ldots, n$. 

The horizontal parallel pair in \eqref{eq:glue-explicit-obj-b} 
is implemented by the relations 
\begin{equation}\label{eq:relations-glue}
\iota_{n,\und{U},\und{\alpha}} (o \otimes \und{a}) =_{\sim} \iota_{n,\und{U},\und{\beta}} (o \otimes \aaa_{\und{\alpha} \und{\beta}}\vert^M \und{a}), 
\end{equation}
\end{subequations}
for all $n \geq 1$, all $n$-tuples $\und{U} = (U_1, \ldots, U_n)$ 
of open subsets of $M$, all $n$-tuples of indices 
$\und{\alpha} = (\alpha_1, \ldots, \alpha_n)$ and $\und{\beta} = (\beta_1, \ldots, \beta_n)$, 
all operations 
$o \in \O_{\ovr{\OO(M)}} \big( \substack{U\\ \und{U}} \big)$
and all observables 
$a_i \in \AAA_{\alpha_i}\vert_{M_{\alpha_i \beta_i}}\vert^M(U_i)$, 
$i = 1, \ldots, n$.
The counits of the extension-restriction adjunctions involved in the 
above equation have been suppressed from our notation, and 
$\aaa_{\und{\alpha} \und{\beta}}\vert^M \und{a} \coloneqq \aaa_{\alpha_1 \beta_1}\vert^M a_1 \otimes \ldots \otimes \aaa_{\alpha_n \beta_n}\vert^M a_n$, 
where $\aaa_{\alpha\beta}\colon \AAA_\alpha \vert_{M_{\alpha\beta}} \to \AAA_\beta \vert_{M_{\alpha\beta}}$ are the natural isomorphisms from \eqref{eq:overlap-isos}. 
To make sense of the equation above, recall that 
our explicit model \eqref{eq:ext-res-Man} 
of the extension-restriction adjunction is such that 
$\AAA_{\alpha} \vert_{M_{\alpha \beta}} \vert^M \cong \AAA_{\alpha} \vert_{M_{\alpha \beta}} \vert^{M_\alpha} \vert^M$. 
Hence, upon acting with the appropriate extension-restriction 
adjunction counit, one obtains observables 
$a_i \in \AAA_{\alpha_i}\vert^M(U_i)$ of the AQFT 
$\AAA_{\alpha_i}\vert^M$ (left-hand side) and observables 
$\aaa_{\alpha_i \beta_i}\vert^M a_i \in \AAA_{\beta_i}\vert^M(U_i)$ 
of the AQFT $\AAA_{\beta_i}\vert^M$ (right-hand side). 

\item For each $U \in \ovr{\OO(M)}$ and each tuple $\und{U}$ 
in $\ovr{\OO(M)}$ of length $n \geq 0$, 
one has the $\O_{\ovr{\OO(M)}}$-action 
\begin{align}\label{eq:action-concrete}
\alpha_{\glue(\AAAA)}\colon \O_{\ovr{\OO(M)}}(\substack{U\\ \und{U}}) \otimes \displaystyle\bigotimes_{i=1}^n \glue(\AAAA)(U_i) &\longrightarrow \glue(\AAAA)(U), \\
o \otimes \bigotimes_{i=1}^n \iota_{k_i,\und{U}_i,\und{\alpha}_i} (o_i \otimes \und{a}_i) &\longmapsto \iota_{k,\und{\und{U}},\und{\und{\alpha}}} (o\, \und{o} \otimes \und{\und{a}}), \nn 
\end{align}
in $\Vec_\bbC$, which is constructed using the composition of 
the AQFT colored $\ast$-operad $\O_{\ovr{\OO(M)}} \in \astOp$. 
Here $k \coloneqq \sum_{i=1}^n k_i$, 
$\und{\und{U}} \coloneqq (\und{U}_1, \ldots, \und{U}_n)$, 
$\und{\und{\alpha}} \coloneqq (\und{\alpha}_1, \ldots, \und{\alpha}_n)$ and 
$o\, \und{o} \coloneqq \gamma_{\O_{\ovr{\OO(M)}}}(o \otimes \und{o})$. 
The formula above, which is given in terms of 
the generators \eqref{eq:generators}, is 
compatible with the relations \eqref{eq:relations}. 
(Recall that operadic compositions are associative.) 
Therefore, it descends to the quotient 
in \eqref{eq:glue-concrete}, thus defining the 
$\O_{\ovr{\OO(M)}}$-actions of the glued AQFT 
$\glue(\AAAA) \in \AQFT(\ovr{\OO(M)})$. 

\item For each $U \in \ovr{\OO(M)}$, one has the $\ast$-involution 
\begin{align}
\ast_{\glue(\AAAA)}\colon \glue(\AAAA)(U) &\longrightarrow J(\glue(\AAAA)(U)), \\
\iota_{n,\und{U},\und{\alpha}} (o \otimes \und{a}) &\longmapsto \iota_{n,\und{U},\und{\alpha}} \big( \ast_{\O_{\ovr{\OO(M)}}} (o) \otimes \ast_{\AAA_{\und{\alpha}}\vert^M} (\und{a}) \big), \nn 
\end{align}
in $\Vec_\bbC$, which is constructed using the $\ast$-involutions 
of the AQFT colored $\ast$-operad $\O_{\ovr{\OO(M)}} \in \astOp$, 
see Definition~\ref{defi:AQFToperad}, 
and the $\ast$-involutions of the (extended) AQFTs 
$\AAA_{\alpha_i}\vert^M \in \AQFT(\ovr{\OO(M)})$, 
for all $i = 1, \ldots, n$. Here 
$\ast_{\AAA_{\und{\alpha}}\vert^M} (\und{a}) \coloneqq \ast_{\AAA_{\alpha_1}\vert^M} (a_1) \otimes \cdots \otimes \ast_{\AAA_{\alpha_n}\vert^M} (a_n)$. 
Again, the formula above, 
which is given in terms of the generators \eqref{eq:generators}, 
is compatible with the relations \eqref{eq:relations}. 
(Recall that $\ast$-operadic compositions, operadic 
$\ast$-algebra actions and operadic $\ast$-algebra morphisms 
are compatible with $\ast$-involutions.) 
Therefore, it descends to the quotient 
in \eqref{eq:glue-concrete}, thus defining the $\ast$-involution 
of the glued AQFT $\glue(\AAAA) \in \AQFT(\ovr{\OO(M)})$. 
\end{enumerate}
 
Another useful preparatory step for the proof of 
Theorem~\ref{th:AQFT} is to present, for any AQFT 
$\AAA \in \AQFT(\ovr{\OO(M)})$, a concrete description 
of the $\AAA$-component 
\begin{subequations}\label{eq:glue-data-counit-concrete} 
\begin{equation}
\glue(\data(\AAA)) \longrightarrow \AAA
\end{equation}
in $\AQFT(\ovr{\OO(M)})$ of the counit of the adjunction 
$\glue \dashv \data\colon \AQFT(\ovr{\OO(M)}) \to \Desc(\MMM)$, 
see \eqref{eq:glue-data-counit}. This consists, for each open 
subset $U \in \ovr{\OO(M)}$, of a linear map 
\begin{align}
\glue(\data(\AAA))(U) &\longrightarrow \AAA(U), \\
\iota_{n,\und{U},\und{\alpha}} (o \otimes \und{a}) &\longmapsto o\, \und{a} \nn 
\end{align}
\end{subequations}
in $\Vec_\bbC$, where $n \geq 0$ is the length, 
$\und{U} = (U_1, \ldots, U_n)$ is an $n$-tuple of open subsets of $M$, 
$\und{\alpha} = (\alpha_1, \ldots, \alpha_n)$ is an $n$-tuple of indices, 
$o \in \O_{\ovr{\OO(M)}} \big( \substack{U\\ \und{U}} \big)$ 
is an operation and 
$\und{a} = a_1 \otimes \cdots \otimes a_n$ consists of 
observables $a_i \in \AAA \vert_{M_{\alpha_i}} \vert^M (U_i)$, 
$i = 1, \ldots, n$. Here 
$o\, \und{a} \coloneqq \alpha_{\AAA} (o \otimes \und{a})$ 
denotes the $\O_{\ovr{\OO(M)}}$-action of 
$\AAA \in \AQFT(\ovr{\OO(M)})$, the counits of the 
extension-restriction adjunctions 
$(-)\vert^M \dashv (-)\vert_{M_{\alpha_i}}$ 
being implicit in our notation. 
The formula above, which is given in terms of 
generators \eqref{eq:generators}, is compatible with the 
relations \eqref{eq:relations}. (Recall that operadic 
$\ast$-algebra actions are associative and that the coherence 
isomorphisms $\aaa_{\alpha \beta} = \id$ 
underlying the descent datum $\data(\AAA) \in \Desc(\MM)$ 
are identities by definition, see \eqref{eq:data}.) 

\begin{proof}[Proof of Theorem~\ref{th:AQFT}]
In order to prove that the $\AAA_M$-component 
of the counit of the adjunction $\glue \dashv \data$ 
(denoted below by $L\colon \glue(\data(\AAA_M)) \to \AAA_M$) 
is an isomorphism in $\AQFT(\ovr{\OO(M)})$, it suffices to construct, 
for all open subsets $U \subseteq M$, the inverse 
\begin{subequations}\label{eq:L-inverse}
\begin{equation}
L_U^{-1}\colon \AAA_M(U) \longrightarrow \glue(\data(\AAA_M))(U) 
\end{equation}
in $\astAlg_\bbC$ of the component $L_U$. (Recall from Remark 
\ref{rem:AQFTsAsFunctors} that morphisms in $\AQFT(\ovr{\OO(M)})$ 
are natural transformations between $\astAlg_\bbC$-valued functors.) 

In order to define the candidate inverse $L_U^{-1}$, recall the 
construction of the probe AQFT $\AAA \in \AQFT(\ovr{\Man_{4 \ell + 1}})$ 
from \eqref{eq:probe-AQFT} and the concrete presentation 
of the vector space underlying the glued AQFT 
$\glue(\data(\AAA_M)) \in \AQFT(\ovr{\OO(M)})$ 
from \eqref{eq:glue-concrete}. 

We define the action of the candidate inverse $L_U^{-1}$ on generators 
$\varphi \in \Omega^{2 \ell}_\cc(U,E_M)$ of the 
$\ast$-algebra $\AAA_M(U) = \AAA(U) \in \astAlg_\bbC$ by 
\begin{equation}
L_U^{-1}(\varphi) \coloneqq \sum_\alpha \iota_{1,U,\alpha} \big( \oone_U \otimes (\chi_\alpha\, \varphi) \big), 
\end{equation}
\end{subequations}
where $\{\chi_\alpha\}$ is any partition of unity 
subordinate to the given open cover $\MMM = \{M_\alpha\}$. 
(Note that we implicitly identify the vector bundle 
$E_U$ associated with $U$ (endowed with the orientation and smooth manifold structure 
induced by $M$) as a vector sub-bundle of $E_M$, 
suppressing the vector bundle map $E_U \to E_M$ 
associated with the open embedding $U \subseteq M$.)
The right-hand side makes sense because: 
(1)~the $1$-ary operation 
$\oone_U = [\id_1,\id_U] \in \O_{\ovr{\OO(M)}} \big( \substack{U\\ U} \big)$ in the AQFT colored $\ast$-operad $\O_{\ovr{\OO(M)}}$ 
is the (obvious) one given by the trivial permutation 
$\id_1 \in \Sigma_1$ and the identity morphism $\id_U\colon U \to U$ 
in $\OO(M)$, see Definition~\ref{defi:AQFToperad}; 
(2)~the section $\chi_\alpha\, \varphi \in \Omega^{2 \ell}_\cc(U \cap M_\alpha,E_{M})$ 
has compact support contained in $U \cap M_\alpha$, 
hence it represents a (linear) observable 
in $\AAA \vert_{M_\alpha} \vert^M (U)$; 
(3)~the sum is finite because $\varphi \in \Omega^{2 \ell}_\cc(U,E_{M})$ 
is compactly supported and the partition of unity 
$\{\chi_\alpha\}$ is locally finite. 
\sk 

Let us show that $L_U^{-1}(\varphi)$ does not depend 
on the choice of partition of unity $\{\chi_\alpha\}$. 
Given another partition of unity $\{\chi^\prime_\alpha\}$, one has 
\begin{align}
L_U^{-1}(\varphi) &= \sum_{\alpha,\beta} \iota_{1,U,\alpha} \big( \oone_U \otimes (\chi_\alpha\, \chi^\prime_{\beta}\, \varphi) \big) \nn \\ 
&= \sum_{\alpha,\beta} \iota_{1,U,\beta} \big( \oone_U \otimes (\chi_\alpha\, \chi^\prime_{\beta}\, \varphi) \big) \nn \\ 
&= \sum_{\beta} \iota_{1,U,\beta} \big( \oone_U \otimes (\chi^\prime_{\beta}\, \varphi) \big). 
\end{align}
In the first step we used the identity 
$\sum_\beta \chi^\prime_\beta = 1$. In the second step we used 
the relation \eqref{eq:relations-glue}. In the last step 
we used the identity $\sum_\alpha \chi_\alpha = 1$. 
\sk 

In order to check that the definition 
in \eqref{eq:L-inverse} is compatible with the quotient 
in \eqref{eq:probe-AQFT}, for all generators 
$\varphi, \psi \in \Omega^{2 \ell}_\cc(U,E_M)$ of the 
$\ast$-algebra $\AAA_M(U) = \AAA(U) \in \astAlg_\bbC$, 
we compute the commutator 
\begin{align}\label{eq:temp}
L_U^{-1}(\varphi)\, L_U^{-1}(\psi) - L_U^{-1}(\psi)\, L_U^{-1}(\varphi) 
&= \sum_{\alpha,\beta} \iota_{2,(U,U),(\alpha,\beta)} \big( (\mu_U - \mu^\op_U) \otimes \chi_\alpha\, \varphi \otimes \chi_\beta\, \psi \big) \nn \\ 
&= \sum_{\alpha = \beta} \iota_{2,(U,U),(\alpha,\beta)} \big( (\mu_U - \mu^\op_U) \otimes \chi_\alpha\, \varphi \otimes \chi_\beta\, \psi \big) \nn \\ 
&\phantom{=} + \sum_{\alpha \neq \beta} \iota_{2,(U,U),(\alpha,\beta)} \big( (\mu_U - \mu^\op_U) \otimes \chi_\alpha\, \varphi \otimes \chi_\beta\, \psi \big). 
\end{align}
Here the $2$-ary operations 
$\mu_U = [\id_2,(\id_U,\id_U)],\, \mu^\op_U = [\op_2,(\id_U,\id_U)] \in \O_{\ovr{\OO(M)}} \big( \substack{U\\ (U,U)} \big)$ 
in the AQFT colored $\ast$-operad $\O_{\ovr{\OO(M)}}$ 
are given by the identity $\id_2 \in \Sigma_2$ 
and reversing $\op_2 \in \Sigma_2$ permutations. 

We continue the computation separately for the diagonal 
$\sum_{\alpha = \beta}$ and off-diagonal 
$\sum_{\alpha \neq \beta}$ parts. 

For the diagonal part 
$\sum_{\alpha = \beta}$, one has 
\begin{align}\label{eq:diag}
\iota_{2,(U,U),(\alpha,\alpha)} \big( (\mu_U - \mu^\op_U) \otimes \chi_\alpha\, \varphi \otimes \chi_\alpha\, \psi \big) &= 
\iota_{2,(U,U),(\alpha,\alpha)} \big( \oone_U (\mu_U - \mu^\op_U) \otimes \chi_\alpha\, \varphi \otimes \chi_\alpha\, \psi \big) \nn \\ 
&= \iota_{1,U,\alpha} \big( \oone_U \otimes (\mu_U - \mu^\op_U)\, (\chi_\alpha\, \varphi \otimes \chi_\alpha\, \psi) \big) \nn \\ 
&= \iota_{1,U,\alpha} \big( \oone_U \otimes i\, \tau_{M_\alpha} (\chi_\alpha\, \varphi \otimes \chi_\alpha\, \psi) \big) \nn \\ 
&= i\, \tau_{M} (\chi_\alpha\, \varphi \otimes \chi_\alpha\, \psi). 
\end{align}
The first step uses that $o = \oone_U\, o$ for all operations 
$o \in \O_{\ovr{\OO(M)}} \big( \substack{U\\ \und{U}} \big)$; 
the second step uses the relation \eqref{eq:relations-action}; 
the third step uses the commutation rules encoded by 
$\AAA_M(U) = \AAA(U) \in \astAlg_\bbC$, see \eqref{eq:probe-AQFT}; 
the last step uses the naturality of the Poisson structure $\tau$ 
in \eqref{eq:tau}. (The unit of the $\ast$-algebra 
$\glue(\data(\AAA_M))(U) \in \astAlg_\bbC$ is suppressed 
from the notation.) 

In order to compute the off-diagonal part 
$\sum_{\alpha \neq \beta}$, we pick a partition of unity 
$\{\rho,\rho^\prime\}$ subordinate to the open cover 
$\{U \cap M_\alpha,V^\prime\}$ of 
$(U \cap M_\alpha) \cup (U \cap M_\beta)$ constructed as follows: 
take an open subset $V \subseteq U$ that includes the support 
of $\chi_\alpha\, \varphi$ and whose closure $\ovr{V}$ (in $U$) is contained 
in $U \cap M_\alpha$, i.e.\ $\supp(\chi_\alpha\, \varphi) \subseteq V$ 
and $\ovr{V} \subseteq U \cap M_\alpha$; define the open subset 
$V^\prime \coloneqq (U \cap M_\beta) \setminus \ovr{V}$; 
by construction, $\{U \cap M_\alpha,V^\prime\}$ is an open cover 
of the union $(U \cap M_\alpha) \cup (U \cap M_\beta)$ and 
$V \cap V^\prime = \emptyset$. 

With these preparations, one has 
\begin{align}\label{eq:offdiag}
\iota&_{2,(U,U),(\alpha,\beta)} \big( (\mu_U - \mu^\op_U) \otimes \chi_\alpha\, \varphi \otimes \chi_\beta\, \psi \big) \nn \\ 
&= \iota_{2,(U,U),(\alpha,\beta)} \big( (\mu_U - \mu^\op_U) \otimes \chi_\alpha\, \varphi \otimes \rho\, \chi_\beta\, \psi \big) + \iota_{2,(U,U),(\alpha,\beta)} \big( (\mu_U - \mu^\op_U) \otimes \chi_\alpha\, \varphi \otimes \rho^\prime\, \chi_\beta\, \psi \big) \nn \\ 
&= \iota_{2,(U,U),(\alpha,\alpha)} \big( (\mu_U - \mu^\op_U) \otimes \chi_\alpha\, \varphi \otimes \rho\, \chi_\beta\, \psi \big) + \iota_{2,(U,U),(\alpha,\beta)} \big( (\mu_U - \mu^\op_U) \otimes \iota_{V}^U (\chi_\alpha\, \varphi) \otimes \iota_{V^\prime}^U (\rho^\prime\, \chi_\beta\, \psi) \big) \nn \\ 
&= \iota_{1,U,\alpha} \big( \oone_U \otimes (\mu_U - \mu^\op_U)\, (\chi_\alpha\, \varphi \otimes \rho\, \chi_\beta\, \psi) \big) + \iota_{2,(U,U),(\alpha,\beta)} \big( (\mu_U - \mu^\op_U)\, (\iota_{V}^U,\iota_{V^\prime}^U) \otimes \chi_\alpha\, \varphi \otimes \rho^\prime\, \chi_\beta\, \psi \big) \nn \\ 
&= i\, \tau_{M_\alpha} (\chi_\alpha\, \varphi \otimes \rho\, \chi_\beta\, \psi) \nn \\ 
&= i\, \tau_{M} (\chi_\alpha\, \varphi \otimes \rho\, \chi_\beta\, \psi) + i\, \tau_{M} (\chi_\alpha\, \varphi \otimes \rho^\prime\, \chi_\beta\, \psi) \nn \\ 
&= i\, \tau_{M} (\chi_\alpha\, \varphi \otimes \chi_\beta\, \psi). 
\end{align}
In the first step we used the identity $\rho + \rho^\prime = 1$. 
In the second step we used the inclusion 
$\supp(\rho\, \chi_\beta\, \varphi) \subseteq U \cap M_{\alpha\beta}$ 
and the relation \eqref{eq:relations-glue} for the first summand 
and the inclusions $\supp(\chi_\alpha \varphi) \subseteq V$, 
$\supp(\rho^\prime \chi_\beta \psi) \subseteq V^\prime$ 
for the second summand. In the third step we used 
the relation \eqref{eq:relations-action} for both summands. 
In the fourth step we used the canonical commutation relations 
from \eqref{eq:probe-AQFT} for the first summand 
and the orthogonality relation $V \cap V^\prime = \emptyset$ 
to show that the second summand vanishes.  
(Indeed, recalling Definition~\ref{defi:AQFToperad}, 
for the orthogonal pair $(\iota_{V}^U,\iota_{V^\prime}^U)$ one has 
$\mu_U\, (\iota_{V}^U,\iota_{V^\prime}^U) = [\id_2,(\iota_{V}^U,\iota_{V^\prime}^U)] = [\op_2, (\iota_{V}^U,\iota_{V^\prime}^U)] = \mu^\op_U\, (\iota_{V}^U,\iota_{V^\prime}^U)$. 
Note that, here, $\iota_V^U$ denotes both the inclusion $U \subseteq V$ 
and the operation $[\id_1, \iota_V^U] \in \O(\substack{U\\ V})$.) 
In the fifth step we used the natural Poisson structure 
$\tau$ from \eqref{eq:tau} and the fact that 
$\supp(\chi_\alpha\, \varphi) \cap \supp(\rho^\prime\, \chi_\beta\, \psi) \subseteq V \cap V^\prime = \emptyset$. 
In the last step we used linearity of $\tau_M$. 

Combining \eqref{eq:diag} and \eqref{eq:offdiag} 
with \eqref{eq:temp}, one finds 
\begin{align}
L_U^{-1}(\varphi)\, L_U^{-1}(\psi) - L_U^{-1}(\psi)\, L_U^{-1}(\varphi) 
&= \sum_{\alpha = \beta} i\, \tau_{M} (\chi_\alpha\, \varphi \otimes \chi_\beta\, \psi) + \sum_{\alpha \neq \beta} i\, \tau_{M} (\chi_\alpha\, \varphi \otimes \chi_\beta\, \psi) \nn \\ 
&= \sum_{\alpha, \beta} i\, \tau_{M} (\chi_\alpha\, \varphi \otimes \chi_\beta\, \psi) \nn \\ 
&= i\, \tau_{M} (\varphi \otimes \psi). 
\end{align}
This proves the compatibility of \eqref{eq:L-inverse} 
with the quotient in \eqref{eq:probe-AQFT} and hence that 
$L_U^{-1}$ from \eqref{eq:L-inverse} is well-defined. 
\sk 

To conclude the proof, let us check that $L_U^{-1}$ 
is indeed the inverse of $L_U$. 
On generators $\varphi \in \Omega^{2 \ell}_\cc(U,E_M)$ of the 
$\ast$-algebra $\AAA_M(U) = \AAA(U) \in \astAlg_\bbC$ one has 
\begin{equation}
L_U(L_U^{-1} \big( \varphi) \big) = L_U \left( \sum_\alpha \iota_{1,U,\alpha} (\oone_U \otimes \chi_\alpha\, \varphi) \right) = \sum_\alpha \oone_U\, (\chi_\alpha\, \varphi) = \varphi. 
\end{equation}
In the first step we used the definition \eqref{eq:L-inverse} 
of $L_U^{-1}$. In the second step we used the concrete description 
of $L_U$ from \eqref{eq:glue-data-counit-concrete}. 
In the last step we used the identity $\sum_\alpha \chi_\alpha = 1$. 
This shows that $L_U\, L_U^{-1} = \id$. 
\sk 

It remains to show that $L_U^{-1}\, L_U = \id$. 
For this purpose, it is useful to make two preliminary observations. 

First, note that the morphisms $L_U^{-1}$, 
for all open subsets $U \subseteq M$, 
are the components of a natural transformation, i.e.\ 
they assemble into a morphism 
$L^{-1}\colon \AAA_M \to \glue(\data(\AAA_M))$ 
in $\AQFT(\ovr{\OO(M)})$. Indeed, given an inclusion of open subsets 
$U \subseteq U^\prime \subseteq M$, for all generators 
$\varphi \in \Omega^{2 \ell}_\cc(U,E_M)$ of the 
$\ast$-algebra $\AAA_M(U) = \AAA(U) \in \astAlg_\bbC$, one has 
\begin{align}
\iota_U^{U^\prime} \big( L_U^{-1}(\varphi) \big) &= \sum_\alpha \iota_{1,U^\prime,\alpha} \big( \iota_U^{U^\prime} \otimes (\chi_\alpha\, \varphi) \big) \nn \\ 
&= \sum_\alpha \iota_{1,U^\prime,\alpha} \Big( \oone_{U^\prime} \otimes \big( \chi_\alpha\, (\iota_{U\,\ast}^{U^\prime} \varphi) \big) \Big) \nn \\ 
&= L^{-1}_{U^\prime} \big( \AAA_M(\iota_U^{U^\prime})\, \varphi \big). 
\end{align}
In the first step we used the definition \eqref{eq:L-inverse} 
of $L_U^{-1}$ and the $\O_{\ovr{\OO(M)}}$-action on 
$\glue(\data(\AAA_M)) \in \AQFT(\ovr{\OO(M)})$, 
see \eqref{eq:action-concrete}. 
In the second step we used the relation \eqref{eq:relations-action} 
and the $\O_{\ovr{\OO(M)}}$-action on 
$\AAA_M \vert_{M_\alpha} \vert^M \in \AQFT(\ovr{\OO(M)})$, 
see Section~\ref{subsec:probe}. 
In the last step we used the definition \eqref{eq:L-inverse} 
of $L_{U^\prime}^{-1}$ and the $\O_{\ovr{\OO(M)}}$-action on 
$\AAA_M \in \AQFT(\ovr{\OO(M)})$. 
\sk 

Second, note that all observables in $\glue(\data(\AAA_M))(U)$ 
are linear combinations of elements of the form 
\begin{equation}
\iota_{n,\und{U},\und{\alpha}}(o \otimes \und{a}) = \iota_{n^\prime,\und{U}^\prime,\und{\alpha}^\prime}(o^\prime \otimes \und{\varphi}) = o^\prime \big( \iota_{1,U^\prime_1,\alpha^\prime_1}(\oone_{U^\prime_1} \otimes \varphi_1) \otimes \cdots \otimes \iota_{1,U^\prime_{n^\prime},\alpha^\prime_{n^\prime}}(\oone_{U^\prime_{n^\prime}} \otimes \varphi_{n^\prime}) \big)
\end{equation}
where $\und{\varphi} = \varphi_1 \otimes \cdots \otimes \varphi_{n^\prime}$ and each $\varphi_i \in \Omega^{2 \ell}_\cc(U^\prime_i \cap M_{\alpha^\prime_i},E_M)$ is a generator of the $\ast$-algebra 
$\AAA_M \vert_{M_{\alpha^\prime_i}} \vert^M(U^\prime_i) \in \astAlg_\bbC$. 
To check this claim, we start from \eqref{eq:generators}. 
In the first step we consider the $\ast$-algebras 
$\AAA_M \vert_{M_{\alpha_j}} \vert^M (U_j) \in \astAlg_\bbC$, 
see \eqref{eq:probe-AQFT}, and the relation \eqref{eq:relations-action}. 
In the last step we use the $\O_{\ovr{\OO(M)}}$-action on 
$\glue(\data(\AAA_M)) \in \AQFT(\ovr{\OO(M)})$ 
from \eqref{eq:action-concrete}. 
\sk 

As a consequence of the first preliminary observation, 
$L^{-1}\, L\colon \glue(\data(\AAA_M)) \to \linebreak \glue(\data(\AAA_M))$ 
is a morphism in $\AQFT(\ovr{\OO(M)})$, in particular it respects 
the $\O_{\ovr{\OO(M)}}$-action from \eqref{eq:action-concrete}. 
Hence, as a consequence of the second preliminary observation, 
it suffices to check 
the identity $L_U^{-1}\, L_U = \id$ on observables 
in $\glue(\data(\AAA_M))(U)$ of the form 
$\iota_{1,U,\beta}(\oone_U \otimes \varphi)$, 
where $\varphi \in \Omega^{2 \ell}_\cc(U \cap M_\beta,E_M)$ 
is a generator of the $\ast$-algebra 
$\AAA_M \vert_{M_{\beta}} \vert^M(U) \in \astAlg_\bbC$. 

With this in mind, one has 
\begin{align}
L_U^{-1} \Big( L_U \big( \iota_{1,U,\beta} (\oone_U \otimes \varphi) \big) \Big) &= L_U^{-1}(\oone_U\, \varphi) \nn \\ 
&= \sum_\alpha \iota_{1,U,\alpha} \big( \oone_U \otimes (\chi_\alpha\, \varphi) \big) \nn \\ 
&= \sum_\alpha \iota_{1,U,\beta} \big( \oone_U \otimes (\chi_\alpha\, \varphi) \big) \nn \\ 
&= \iota_{1,U,\beta} (\oone_U \otimes \varphi). 
\end{align}
In the first step we used the concrete description 
\eqref{eq:glue-data-counit-concrete} of $L_U$. 
In the second step we used the definition \eqref{eq:L-inverse} 
of $L_U^{-1}$. In the third step we used the relation 
\eqref{eq:relations-glue} in combination with the fact that 
the support of $\chi_\alpha\, \varphi$ is contained in 
$U \cap M_{\alpha\beta}$. 
In the last step we used the identity $\sum_\alpha \chi_\alpha = 1$. 
This shows that $L_U\, L_U^{-1} = \id$, concluding the proof. 
\end{proof}

%%%%%%%%%%%%%%%%%%%%%%%%%%%%%%%%%%%%%%%%%%%%%%%%
%%%%%%%%%%%%%%%%%%%%%%%%%%%%%%%%%%%%%%%%%%%%%%%%

\section*{Acknowledgments}
The work of M.B.\ is supported in part by the MIUR Excellence 
Department Project awarded to Dipartimento di Matematica, 
Università di Genova (CUP D33C23001110001) and it is fostered by 
the National Group of Mathematical Physics (GNFM-INdAM (IT)). 

%\section*{Data availability statement}
%Data sharing not applicable to this article as no datasets 
%were generated or analysed during the current study. 

%%%%%%%%%%%%%%%%%%%%%%%%%%%%%%%%%%%%%%%%%%%%%%%%
%%%%%%%%%%%%%%%%%%%%%%%%%%%%%%%%%%%%%%%%%%%%%%%%

%%%%%%%%%%%%%%%%%%%%%%%%%%%%%%%%%%%%%%%%%%%%%%%%
%%%%%%%%%%%%%%%%%%%%%%%%%%%%%%%%%%%%%%%%%%%%%%%%

\end{document}